\documentclass[conference]{IEEEtran}
\IEEEoverridecommandlockouts

\usepackage{cite}
\usepackage{amsmath,amssymb,amsfonts}
\usepackage{algorithm}
\usepackage{algorithmic}
\usepackage{graphicx}
\usepackage{textcomp}
\usepackage{xcolor}
\def\BibTeX{{\rm B\kern-.05em{\sc i\kern-.025em b}\kern-.08em
    T\kern-.1667em\lower.7ex\hbox{E}\kern-.125emX}}

\usepackage{amsmath,amssymb,amsthm,bm}
\usepackage{booktabs}
\usepackage{bbm}

\usepackage{tikz}
\usetikzlibrary{arrows.meta,positioning,fit,shapes.geometric}

\newtheorem{definition}{Definition}
\newtheorem{assumption}{Assumption}
\newtheorem{theorem}{Theorem}
\newtheorem{corollary}{Corollary}

\newcommand{\E}{\mathbb{E}}

\newcommand{\LNOQRD}{\textsc{LNO-QRD}}

\begin{document}

\title{
Learning Not to Optimize: Physics-Informed Action-Space Reshaping for Intent-Based \\ Network Control
}

\author{
\IEEEauthorblockN{Zuyuan Zhang}
\IEEEauthorblockA{
\textit{The George Washington University}\\
zuyuan.zhang@gwu.edu
}
\and
\IEEEauthorblockN{Vaneet Aggarwal}
\IEEEauthorblockA{
\textit{Purdue University}\\
vaneet@purdue.edu
}
\and
\IEEEauthorblockN{Tian Lan}
\IEEEauthorblockA{
\textit{The George Washington University}\\
tlan@gwu.edu
}
}

\maketitle

\begin{abstract}
Modern network policy control maps intent to sequential placement-control decisions.
Bellman-style policy optimization primarily asks which action to optimize, while constraints are commonly handled through penalty, barrier, or Lagrangian mechanisms.
We observe that before a value function can certify the best deployment, intermediate signals may already identify many candidates that should be excluded from further optimization.
This motivates a complementary direction: \emph{Learning Not to Optimize}. Before a value function is accurate enough to select the best placement-control decision, intermediate signals may already show that candidates are equivalent under state--intent relabeling (quotienting), lead to a uniformly worse future state (dominance), or violate executable network laws (residual screening). \LNOQRD{} uses these computed or learned signals as a shadow process to reshape the domain on which primal policy optimization is performed, thereby reducing the action space.
We prove lossless quotienting and dominance under explicit equivariance and monotonicity conditions, bound frontier size and ranking cost, and quantify losses from approximate certificates and primal estimates.
Experiments show that \LNOQRD{} reduces small-instance candidates by $75.9\%$ while retaining $90.8\%$ near-oracle coverage and, on large instances, achieves the highest utility and intent satisfaction, the lowest hard-law violation and post-generation latency, and a $73.0\%$ average reduction among candidate-based baselines.
\end{abstract}

\begin{IEEEkeywords}
Network policy control, physics-informed learning, learning not to optimize, constrained reinforcement learning, constrained Markov decision process, action-space reduction, quotient action space, residual screening, dominance pruning.
\end{IEEEkeywords}

\section{Introduction}

Modern programmable networks expose a concrete intent-conditioned placement-control problem.
A translated service-chain or network-slice intent must be realized through joint function placement, flow routing, resource allocation, and admission under time-varying capacity, queues, delay, reliability, isolation, and service dependencies~\cite{clemm2022intent,li2022intent,leivadeas2022survey}.
This problem arises in network function virtualization, service function chaining, network slicing, mobile-edge computing, data-center virtualization, and SDN-based network management~\cite{halpern2015service,mijumbi2015network,bhamare2016survey,foukas2017network,ordonez2017network,sun2022survey,zhang2026lisfc}.
A standard abstraction is a sequential decision problem governed by Bellman optimality, in which a one-step deployment is evaluated together with its future state~\cite{puterman2014markov}.
Existing learning-based controllers mainly improve which deployment is selected through policy learning, graph or attention encoders, candidate ranking, and constrained policy updates~\cite{mao2016resource,mao2019learning,kipf2016semi,velivckovic2017graph,vaswani2017attention,lee2019set,achiam2017constrained,zhang2024distributed,zou2024distributed,zhang2025network}.

This paper starts from a complementary observation: before the controller can certify the best deployment, it may already have enough information to exclude many candidates.
Intermediate signals can reveal state--intent-equivalent deployments~\cite{givan2003equivalence,ravindran2004algebraic,van2020mdp,zhang2026operator,yu2024look}, candidates that leave a uniformly worse residual state under monotone dynamics~\cite{topkis1998supermodularity,smith2002structural,jiang2015approximate,zhanggeometric,zhang2026geometry,zhanghodgeflow}, or violations of executable network laws.
The information required to infer these network-physics-informed exclusion certificates can therefore be substantially weaker than the information required to establish optimality.
Unlike penalty-, barrier-, and Lagrangian-based constrained RL, which enforce constraints within the policy objective or update~\cite{altman2021constrained,achiam2017constrained,tessler2018reward,liu2020ipo,qiao2024br,tang2026nonzero,zhang2026counterfactual}, \LNOQRD{} runs as a shadow process that reshapes the candidate domain before expensive primal ranking.
It converts intermediate primal-learning signals into residual, quotienting, and dominance certificates, thereby excluding candidates before exploration, value estimation, or ranking rather than merely changing their optimization scores.

We instantiate this idea as \emph{Learning Not to Optimize via Quotienting, Residuals, and Dominance} (\LNOQRD).
For an intent-augmented state $x=(s,I)$ and a generated candidate set $\mathcal C_N(x)$, \LNOQRD{} constructs the reduced frontier
$
\mathcal B_\lambda(x)
=
\operatorname{Top}^{\psi_\lambda}_{K_{\rm fr}}\!\left(
\operatorname{ND}_{\lambda,I}\!\left(
\mathcal F^I_\varepsilon(\mathcal C_N(x))/\Gamma_x
\right)
\right)$
Here quotienting by the state-intent stabilizer $\Gamma_x$ merges equivalent placement-control classes; $\mathcal F^I_\varepsilon$ screens candidates by learned or computed violation residuals over resource-state, topology-flow, queue, reliability, and logical-intent laws; and $\operatorname{ND}_{\lambda,I}$ removes dominated quotient classes under fixed-$\lambda$ deployment dynamics. In the end, $\operatorname{Top}^{\psi_\lambda}_{K_{\rm fr}}$ returns a reshaped frontier for the primal policy optimization under desired exploration and execution budget. During training and inference, \LNOQRD{} runs as a shadow process alongside primal policy learning: the primal learner provides residual estimates, structural signatures, and energy scores, while \LNOQRD{} converts these intermediate signals into certificates and feeds the reduced frontier back to the primal learner. We analyze theoretical properties of this closed-loop operation and provide rigorous guarantees.

Theorems~\ref{thm:quotient_optimality} and~\ref{thm:dominance_reduction}, together with Corollary~\ref{cor:exact_reduction}, show that exact quotienting and dominance preserve at least one optimal generated executable deployment under equivariant deployment dynamics and monotone fixed-$\lambda$ values.
Theorem~\ref{thm:frontier_size} bounds the retained set by $\min\{K_{\rm fr},\alpha_{\rm law}(x)\alpha_{\rm nd}(x)\kappa_\Gamma(x)|\mathcal C_N(x)|/|\Gamma_x|\}$, where $\alpha_{\rm law}$, $\alpha_{\rm nd}$, and $\kappa_\Gamma$ denote the law-survival fraction, non-dominated fraction, and maximum stabilizer size; thus ranking cost falls with stronger screening, larger symmetry orbits, and a sparser non-dominated frontier when the saved scoring cost exceeds reduction overhead.
Theorem~\ref{thm:frontier_accounting} and Theorem~\ref{thm:intermediate_signal_error} localize approximation loss to generation, screening, quotienting, dominance, budget truncation, stage slack, and critic error. Consequently, Theorem~\ref{thm:approx_reduced_policy} gives the fixed-$\lambda$ value gap $(\delta+2\epsilon+\tau\log N_{\max})/(1-\gamma)$, making the effects of frontier coverage, energy error, policy temperature, frontier size, and discounting explicit.

We evaluate four placement-control families---service-graph deployment, network-slice realization, service-pipeline deployment, and dynamic drift---with joint placement, routing, allocation, and admission. Small instances use $8$--$10$ nodes, $4$--$5$ tasks, $N=64$, and top-$K=16$; large instances use $124$--$128$ nodes, $10$--$14$ tasks, $N=512$, and top-$K=24$, with traffic, queues, link rates, and reliability varying under drift. All candidate-based methods share the same generator. \LNOQRD{} reduces small-instance candidates by $75.9\%$ while retaining $90.8\%$ near-oracle coverage; on large instances, it attains the highest utility and intent satisfaction, the lowest hard-law violation and post-generation latency, and $73.0\%$ average reduction among candidate-based baselines.

\section{Intent-Conditioned Deployment CMDP and Law Residual Signals}
\label{sec:model}

This section defines only the finite-candidate control interface used by \LNOQRD. We do not model natural-language parsing; an upstream translator maps a request into a structured intent containing service dependencies, traffic demands, hard deployment laws, and optimization preferences \cite{li2022intent,yao2025use,leivadeas2022survey}. Given the current network state and translated intent, the primal controller selects an executable high-utility deployment. The not-to-optimize layer uses the same candidate evaluation to remove generated candidates that are law-violating, dominated, or redundant before expensive primal ranking. Here ``dual'' means this auxiliary reduction role, not the Lagrangian dual of the CMDP.

\subsection{State, Intent, and Deployment Action}

At epoch $t$, the physical network is a directed graph $G_t=(V_t,E_t)$. Let $\mathcal R=\{\mathrm{cpu},\mathrm{mem},\mathrm{sto}\}$ be the resource types. Node and link features are
\begin{align*}
    \xi_v(t)&=\big(C_v^r(t)\big)_{r\in\mathcal R}
    \oplus q_v(t)\oplus \rho_v(t)\oplus\alpha_v(t),\\
    \zeta_e(t)&=\big(B_e(t),d_e(t),\rho_e(t),\eta_e(t)\big),
\end{align*}
where $C_v^r$ is residual node resource, $q_v$ is queue pressure, $B_e$ is residual bandwidth, $d_e$ is link delay, $\rho_v,\rho_e\in(0,1]$ are reliability values, and $\alpha_v,\eta_e$ denote non-capacity node and link attributes. The physical state is
$
    s_t=(G_t,(\xi_v(t))_{v\in V_t},(\zeta_e(t))_{e\in E_t},Q_t)
$
with $Q_t=(q_v(t))_{v\in V_t}$.

A translated intent is $I_t=(H_t,\mathcal D_t,\mathcal C_t,\mathcal O_t)$, where $H_t=(K_t,L_t)$ is a logical service graph, $\mathcal D_t$ contains task and traffic demands, $\mathcal C_t$ contains hard one-step deployment laws and service-level requirements, and $\mathcal O_t$ gives optimization preferences. Each task $k\in K_t$ has demand $r_k=(r_k^r)_{r\in\mathcal R}$, and each logical dependency $\ell=(i,j)\in L_t$ has traffic demand $b_\ell\ge0$. We write the augmented state as
$
    x_t=(s_t,I_t)\in\mathcal X,
$
and omit $t$ when clear.

A deployment action is $a=(M,F,Z,A^{\rm adm})$. Here $M_{kv}\in\{0,1\}$ indicates whether task $k$ is placed on node $v$, $A_k^{\rm adm}\in\{0,1\}$ indicates admission, $F=(f^\ell)_{\ell\in L}$ with $f^\ell\in\mathbb R_+^{|E|}$ is the routing or flow realization, and $z_{kv}^r$ is the resource allocation. Let $\mathcal A_{\rm raw}(x)$ be the syntactically valid action domain induced by $x$. A raw action has valid dimensions and variable types but may still violate deployment laws; only law-satisfying actions are executable.

\subsection{Executable Network Laws}

Executability couples resource, flow, queue, reliability, and intent constraints. Each admitted task must be placed exactly once, and hard dependencies must be admission-consistent:
\begin{align}
    \sum_{v\in V}M_{kv}&=A_k^{\rm adm},
    &&\forall k\in K, \label{eq:placement_law}\\
    A_i^{\rm adm}&=A_j^{\rm adm},
    &&\forall (i,j)\in L_{\rm hard}\subseteq L .
    \label{eq:dependency_admission_law}
\end{align}
For any dependency $\ell=(i,j)$, let $\chi_\ell(A^{\rm adm})\in\{0,1\}$ indicate whether the dependency is active. For hard dependencies, $\chi_\ell=A_i^{\rm adm}=A_j^{\rm adm}$; for optional or degraded dependencies, $\chi_\ell\le A_i^{\rm adm}$ and $\chi_\ell\le A_j^{\rm adm}$. Define $\widetilde b_\ell(a)=b_\ell\chi_\ell(A^{\rm adm})$.

Allocation and capacity laws are
\begin{align}
    z_{kv}^r &\ge M_{kv}r_k^r,
    &&\forall k\in K,\;v\in V,\;r\in\mathcal R,
    \label{eq:resource_lower_law}\\
    z_{kv}^r &\le M_{kv}C_v^r,
    &&\forall k\in K,\;v\in V,\;r\in\mathcal R,
    \label{eq:resource_upper_law}\\
    \sum_{k\in K}z_{kv}^r &\le C_v^r,
    &&\forall v\in V,\;r\in\mathcal R,
    \label{eq:node_capacity_law}\\
    \sum_{\ell\in L}f_e^\ell &\le B_e,
    &&\forall e\in E .
    \label{eq:link_capacity_law}
\end{align}
For flow conservation, let $B\in\mathbb R^{|V|\times|E|}$ be the node-link incidence matrix with $-1$ at the tail and $+1$ at the head, and let $m_k=(M_{kv})_{v\in V}$. For $\ell=(i,j)$,
\begin{align}
    Bf^\ell &= \widetilde b_\ell(a)(m_j-m_i),
    &&\forall \ell\in L, \label{eq:flow_law}\\
    0\le f_e^\ell &\le \widetilde b_\ell(a),
    &&\forall \ell\in L,\;e\in E .
    \label{eq:flow_activation_law}
\end{align}
Thus inactive dependencies carry zero flow, while active dependencies route $b_\ell$ units from the selected location of $i$ to that of $j$.

Intent-level hard laws are written compactly as
\begin{align}
    D(a;x)&\le \bar D(I), \label{eq:model_delay_bound}\\
    R(a;x)&\ge \bar R(I), \label{eq:model_reliability_bound}\\
    h_j(a;x)&\le 0,\qquad j=1,\ldots,m_h,
    \label{eq:model_generic_bound}
\end{align}
where $D$ is end-to-end delay, $R$ is deployment reliability, and $h_j$ represents isolation, affinity, anti-affinity, trust-domain, jurisdiction, stage-ordering, admission, or policy-compliance laws. These functions may be instantiated by any executor-computable model; for example,
\begin{equation}
\begin{split}
    D(a;x)=&\sum_{k,v}M_{kv}D_{kv}^{\rm proc}(z_{kv},q_v)
    +\sum_{\ell,e}\omega_{\ell e}(a)d_e \\
    &+D^{\rm queue}(a;x),
\end{split}
\label{eq:delay_model_revised}
\end{equation}
\begin{equation}
\begin{split}
    \log R(a;x)=&\sum_{k,v}M_{kv}\log\rho_v
    +\sum_{\ell,e}\psi_{\ell e}(a)\log\rho_e,
\end{split}
\label{eq:reliability_model_revised}
\end{equation}
where $\omega_{\ell e}=f_e^\ell/\widetilde b_\ell$ when $\widetilde b_\ell>0$ and $0$ otherwise, and $\psi_{\ell e}$ is either path incidence or a continuous exposure weight. Queue evolution is
\begin{equation}
    Q_{t+1}=\big[Q_t+\Lambda_t-\mu(x_t,a_t)\big]_+,
    \label{eq:queue_model_revised}
\end{equation}
where $\Lambda_t$ is exogenous arrival and $\mu(x_t,a_t)$ is deployment-induced service \cite{tassiulas1990stability,georgiadis2006resource}.

\subsection{Law Residuals and Finite Candidates}

\LNOQRD{} converts executable laws into normalized residual coordinates. Let $\{c_\nu(x,a)\le0\}_{\nu\in\mathcal J_{\le}}$ collect scalar inequality laws and $\{e_\nu(x,a)=0\}_{\nu\in\mathcal J_=}$ collect scalar equality laws, with vector laws treated coordinatewise. For scale $\sigma_\nu>0$,
\begin{equation}
    \bar\phi_\nu(x,a)=
    \begin{cases}
        [c_\nu(x,a)]_+/\sigma_\nu, & \nu\in\mathcal J_{\le},\\[1mm]
        |e_\nu(x,a)|/\sigma_\nu, & \nu\in\mathcal J_= .
    \end{cases}
    \label{eq:normalized_residual_coordinate}
\end{equation}
Let $\bar\phi(x,a)$ collect these coordinates, grouped into placement, dependency, node, flow, link, queue, and intent residuals:
\begin{align}
    \bar\phi(x,a)&=(\bar\phi_\nu(x,a))_{\nu\in\mathcal J_{\le}\cup\mathcal J_=},
    \label{eq:law_residual_vector_model}\\
    \mathsf{Res}_{\rm law}(x,a)&=\|\bar\phi(x,a)\|_2^2 .
    \label{eq:law_penalty_model}
\end{align}
By construction, $\bar\phi(x,a)=0$ iff all encoded hard deployment laws are satisfied. Representative residual components are
\begin{align}
    \bar\phi_{\rm node}(x,a)&=
    \sum_{v,r}\frac{[\sum_k z_{kv}^r-C_v^r]_+}{\sigma_{v,r}^{\rm node}},
    \label{eq:node_residual}\\
    \bar\phi_{\rm link}(x,a)&=
    \sum_e\frac{[\sum_\ell f_e^\ell-B_e]_+}{\sigma_e^{\rm link}},
    \label{eq:link_residual}\\
    \bar\phi_{\rm queue}(x,a)&=
    \sum_v\frac{[\widehat Q_v^+(x,a)-\bar Q_v(I)]_+}{\sigma_v^{\rm queue}},
    \label{eq:queue_residual}\\
    \bar\phi_{\rm intent}(x,a)&=
    \frac{[D(a;x)-\bar D(I)]_+}{\sigma^D}
    +\frac{[\bar R(I)-R(a;x)]_+}{\sigma^R}
    \notag\\
    &\quad +\sum_{j=1}^{m_h}\frac{[h_j(a;x)]_+}{\sigma_j^h},
    \label{eq:intent_residual_model}
\end{align}
where $\widehat Q^+(x,a)=[Q+\Lambda-\mu(x,a)]_+$ and $\bar Q_v(I)$ is the intent-conditioned one-step queue threshold at node $v$. Normalization prevents bandwidth, delay, queue length, or reliability units from dominating only because of scale.

The raw domain $\mathcal A_{\rm raw}(x)$ may be mixed-integer and partially continuous \cite{boyd2004convex,papadimitriou1998combinatorial}. The online interface therefore uses a fixed generated candidate set
\begin{equation}
    \mathcal C_N(x)=
    \{a_i=\mathsf G_i(x):i=1,\ldots,N\}
    \subseteq\mathcal A_{\rm raw}(x),
    \label{eq:candidate_generator_model}
\end{equation}
shared by all compared candidate-based methods. The executable finite-candidate set is
\begin{equation}
    \mathcal E_N(x)=\{a\in\mathcal C_N(x):\bar\phi(x,a)=0\}. 
    \label{eq:executable_candidate_set}
\end{equation}
To keep Bellman maximization defined even when no generated action is executable, define
\begin{equation}
    \widetilde{\mathcal E}_N(x)=
    \begin{cases}
        \mathcal E_N(x), & \mathcal E_N(x)\neq\emptyset,\\
        \{\bot\}, & \mathcal E_N(x)=\emptyset,
    \end{cases}
    \label{eq:effective_executable_set}
\end{equation}
where $\bot$ is an executor-level rejection, degradation, or infeasibility-declaration action with safe transition and fallback reward. If the intent permits controlled rejection, $\bot$ can be implemented through admission control; otherwise it is a safe failure certificate.

\subsection{Finite-Candidate CMDP}

Under a selected action $a_t\in\widetilde{\mathcal E}_N(x_t)$, the augmented state evolves by a Markov kernel $P_X(\cdot\mid x_t,a_t)$. Let $u(x,a)$ be bounded deployment utility and let $g_j(x,a)$ be bounded soft or long-term costs, such as energy, migration, service degradation, or long-term risk. These $g_j$ are distinct from the hard one-step laws encoded in $\bar\phi(x,a)$. A deployed policy satisfies $\pi(a\mid x)=\pi(a\mid s,I)$ with support on $\widetilde{\mathcal E}_N(x)$.

The discounted constrained problem is
\begin{align}
    \max_\pi\quad
    &J_0(\pi)=\mathbb E_\pi
    \left[\sum_{t=0}^{\infty}\gamma^t u(x_t,a_t)\right],
    \label{eq:cmdp_objective_revised}\\
    \text{s.t.}\quad
    &J_j(\pi)=\mathbb E_\pi
    \left[\sum_{t=0}^{\infty}\gamma^t g_j(x_t,a_t)\right]
    \le d_j,
    \quad j=1,\ldots,m,
    \label{eq:cmdp_constraints_revised}
\end{align}
with $\gamma\in(0,1)$. For a fixed multiplier $\lambda\ge0$, define
\begin{equation}
    r_\lambda(x,a)=u(x,a)-\lambda^\top g(x,a).
    \label{eq:lagrangian_reward_revised}
\end{equation}
The fixed-$\lambda$ Bellman quantities over the effective executable interface are
\begin{align}
    V_\lambda^\star(x)&=
    \max_{a\in\widetilde{\mathcal E}_N(x)}Q_\lambda^\star(x,a),
    \label{eq:vstar_revised}\\
    Q_\lambda^\star(x,a)&=
    r_\lambda(x,a)+
    \gamma\mathbb E_{x'\sim P_X(\cdot\mid x,a)}
    [V_\lambda^\star(x')].
    \label{eq:qstar_revised}
\end{align}
Thus the primal optimizer ranks executable candidates by value, while \LNOQRD{} uses additional signals from the same candidate evaluation---law residuals, queue residuals, dependency violations, and intent-preserving equivalence signatures---to decide which raw candidates should not be optimized separately.

\section{\LNOQRD: Learning Not to Optimize via Quotienting, Residuals, and Dominance}
\label{sec:qold}
This section constructs the not-to-optimize frontier used by \LNOQRD. Given the augmented state $x=(s,I)$, the finite-candidate CMDP in Section~\ref{sec:model} asks the primal controller to rank executable deployments. \LNOQRD{} inserts a dual reduction layer before this expensive ranking step: it does not choose the optimal candidate, but removes candidates that are law-violating, state--intent equivalent, dominated, or outside the online budget.

Given $\mathcal C_N(x)=\{a_i\}_{i=1}^N$, the reduction pipeline is
$
\mathcal C_N(x)\to
\mathcal S_\varepsilon(x)\to
\mathcal Q_\varepsilon(x)\to
\mathcal D_\lambda(x)\to
\mathcal B_\lambda(x),
$
where the sets denote, respectively, residual-screened candidates, quotient classes, the non-dominated quotient frontier, and the final budgeted frontier. If no generated candidate is executable, the pipeline returns the fallback action $\bot$ from Section~\ref{sec:model}. The discarded set is $\mathcal N^{\rm no}_\lambda(x)=\mathcal C_N(x)\setminus\mathcal B_\lambda(x)$, and each discarded candidate receives a structural certificate rather than merely a low learned score.

\subsection{Residual Screening}

The first stage removes candidates that violate executable network laws. For each $a\in\mathcal C_N(x)$, \LNOQRD{} evaluates the normalized residual $\bar\phi(x,a)$ and scalar law residual $\mathsf{Res}_{\rm law}(x,a)$ from Section~\ref{sec:model}. For tolerance $\varepsilon\ge0$, let
$
\mathcal F^I_\varepsilon(\mathcal C_N(x))
=
\{a\in\mathcal C_N(x):\mathsf{Res}_{\rm law}(x,a)\le\varepsilon\}.
$
Strict executability uses $\varepsilon=0$, while $\varepsilon>0$ allows numerical margins. To keep the Bellman recursion defined even when all generated deployments violate hard laws, set

\begin{equation}
    \mathcal S_\varepsilon(x)
    =
    \begin{cases}
    \mathcal F^I_\varepsilon(\mathcal C_N(x)),
    &\mathcal F^I_\varepsilon(\mathcal C_N(x))\neq\emptyset,\\[1mm]
    \{\bot\},
    &\mathcal F^I_\varepsilon(\mathcal C_N(x))=\emptyset .
    \end{cases}
    \label{eq:lno_screened_set}
\end{equation}
The fallback $\bot$ is the executor-level rejection, degradation, or infeasibility-declaration action in \eqref{eq:effective_executable_set}, not an infeasible generated deployment. Hence exact residual screening removes only non-executable generated candidates; loss at this stage can only come from finite generation, numerical tolerance, or fallback when no executable candidate is generated.
Operationally, $\varepsilon$ is a screening tolerance rather than permission to deploy an invalid action. Before deployment, the executor recomputes the exact residual; a selected candidate with $\bar\phi(x,a)\neq 0$ is rejected, repaired into an executable action, or mapped to $\bot$. Thus, exact deployment remains supported on $\widetilde{\mathcal E}_N(x)$, while errors introduced by approximate screening are charged to $\Delta_{\rm res}$.

\subsection{State--Intent Quotienting}

Residual screening removes invalid candidates; quotienting removes redundant valid ones. If two deployments differ only by state--intent preserving relabelings of nodes, links, tasks, or dependencies, then the Bellman problem assigns them the same value. Ranking both separately only wastes primal optimization effort.

\begin{definition}[State--intent automorphism]
\label{def:automorphism}
Let $\Gamma$ be a finite relabeling group acting on physical graphs, service graphs, intents, augmented states, and deployment actions. For $x=(s,I)$, define the stabilizer
$
\Gamma_x=\{\gamma\in\Gamma:\gamma x=x\},
$
where each $\gamma\in\Gamma_x$ preserves graph incidence, resource attributes, task demands, traffic demands, and intent constraints; also set $\gamma\bot=\bot$. For $a,b\in\mathcal S_\varepsilon(x)$,
$
a\sim_x b
\Longleftrightarrow
\exists\gamma\in\Gamma_x\text{ such that }b=\gamma a .
$
For finite $U\subseteq\mathcal S_\varepsilon(x)$, $U/\Gamma_x$ denotes the induced quotient set.
\end{definition}

Using Definition~\ref{def:automorphism}, the quotient set is $\mathcal Q_\varepsilon(x)=\mathcal S_\varepsilon(x)/\Gamma_x$. A deterministic map $\operatorname{rep}_x:\mathcal Q_\varepsilon(x)\to\mathcal S_\varepsilon(x)$ selects one representative per class, with $\operatorname{rep}_x(\{\bot\})=\bot$ in the fallback case.

Exact automorphism computation can be expensive on large attributed graphs. We therefore use exact quotienting when available and otherwise use deterministic state--intent signatures over resource, queue, reliability, domain, task-type, traffic, and path features. Exact quotienting is lossless; signature quotienting is conservative if it only splits exact orbits and approximate if it merges non-equivalent actions. The latter loss is accounted for in Section~\ref{sec:guarantees}.

\begin{assumption}[Equivariant deployment interface]
\label{ass:equiv_deployment}
For every $\gamma\in\Gamma$, augmented state $x$, action $a$, measurable $\mathcal B\subseteq\mathcal X$, and multiplier $\lambda\ge0$, the executable interface, fixed-$\lambda$ reward, and transition kernel satisfy
$
\widetilde{\mathcal E}_N(\gamma x)=\gamma\widetilde{\mathcal E}_N(x),\quad
r_\lambda(\gamma x,\gamma a)=r_\lambda(x,a),\quad
P_X(\mathcal B\mid\gamma x,\gamma a)=P_X(\gamma^{-1}\mathcal B\mid x,a).
$
\end{assumption}

\begin{theorem}[Lossless quotienting]
\label{thm:quotient_optimality}
Under Assumption~\ref{ass:equiv_deployment}, for every $\lambda\ge 0$,
\begin{equation}
    Q_\lambda^\star(\gamma x,\gamma a)
    =
    Q_\lambda^\star(x,a),
    \qquad
    \forall \gamma\in\Gamma .
    \label{eq:q_equivariance}
\end{equation}
Consequently, for fixed $x$ and any $\gamma\in\Gamma_x$,
$
    Q_\lambda^\star(x,\gamma a)=Q_\lambda^\star(x,a).
$
Therefore, keeping one representative per exact orbit in $\mathcal S_\varepsilon(x)/\Gamma_x$ preserves the best fixed-$\lambda$ generated action value inside $\mathcal S_\varepsilon(x)$.
\end{theorem}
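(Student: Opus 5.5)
The plan is to prove the equivariance \eqref{eq:q_equivariance} at the level of the optimal value function first, and then read off the statements about $Q^\star_\lambda$. Let $\mathcal T_\lambda$ be the fixed-$\lambda$ Bellman optimality operator on bounded functions on $\mathcal X$:
\[
(\mathcal T_\lambda V)(x)=\max_{a\in\widetilde{\mathcal E}_N(x)}\Big\{r_\lambda(x,a)+\gamma\,\E_{x'\sim P_X(\cdot\mid x,a)}[V(x')]\Big\}.
\]
Boundedness of $u$ and $g$ makes $r_\lambda$ bounded, and $\gamma<1$ makes $\mathcal T_\lambda$ a sup-norm contraction. Its unique fixed point is $V^\star_\lambda$ from \eqref{eq:vstar_revised}--\eqref{eq:qstar_revised}.

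Next, let $\mathcal I=\{V:\ V(\gamma x)=V(x)\ \forall\gamma\in\Gamma,\ x\in\mathcal X\}$ be the closed subspace of $\Gamma$-invariant bounded functions. I will show that $\mathcal T_\lambda$ maps $\mathcal I$ into itself. Fix $V\in\mathcal I$ and $\gamma\in\Gamma$. By Assumption~\ref{ass:equiv_deployment}, the map $a\mapsto\gamma a$ is a bijection from $\widetilde{\mathcal E}_N(x)$ onto $\widetilde{\mathcal E}_N(\gamma x)$; the case $\{\bot\}$ is covered by $\gamma\bot=\bot$. Also $r_\lambda(\gamma x,\gamma a)=r_\lambda(x,a)$. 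For the expectation, the kernel identity $P_X(\mathcal B\mid\gamma x,\gamma a)=P_X(\gamma^{-1}\mathcal B\mid x,a)$ says that $P_X(\cdot\mid\gamma x,\gamma a)$ is the pushforward of $P_X(\cdot\mid x,a)$ under $x'\mapsto\gamma x'$. Hence
\[
\E_{x'\sim P_X(\cdot\mid\gamma x,\gamma a)}[V(x')]=\E_{x'\sim P_X(\cdot\mid x,a)}[V(\gamma x')]=\E_{x'\sim P_X(\cdot\mid x,a)}[V(x')].
\]
I will justify this change of variables by starting from indicator functions and extending by linearity and monotone limits; this needs $\gamma$ to act measurably, which is implicit in the assumption. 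Maximizing over the bijected action sets then gives $(\mathcal T_\lambda V)(\gamma x)=(\mathcal T_\lambda V)(x)$.

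Since $\mathcal I$ is closed and invariant under the contraction, Banach iteration started from $V\equiv0$ stays in $\mathcal I$ and converges to $V^\star_\lambda$, so $V^\star_\lambda\in\mathcal I$. Substituting this into \eqref{eq:qstar_revised} and using the same reward and pushforward identities yields $Q^\star_\lambda(\gamma x,\gamma a)=Q^\star_\lambda(x,a)$, which is \eqref{eq:q_equivariance}. For $\gamma\in\Gamma_x$ we have $\gamma x=x$, so $Q^\star_\lambda(x,\gamma a)=Q^\star_\lambda(x,a)$. Every element of an orbit in $\mathcal S_\varepsilon(x)/\Gamma_x$ therefore has the same value. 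Consequently $\max_{a\in\mathcal S_\varepsilon(x)}Q^\star_\lambda(x,a)=\max_{C\in\mathcal Q_\varepsilon(x)}Q^\star_\lambda(x,\operatorname{rep}_x(C))$, because every maximizer lies in some class whose representative attains the same value.

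The main obstacle is the measure-theoretic step: the pushforward reading of the kernel condition, and the requirement that $\mathcal I$ be closed so that the limit stays invariant. Both are routine once one notes that $\Gamma$ is finite and acts by measurable bijections. A smaller subtlety is that $Q^\star_\lambda$ is defined on all actions, including those in $\mathcal S_\varepsilon(x)$ with $\varepsilon>0$, through \eqref{eq:qstar_revised}. The orbit statement only needs \eqref{eq:q_equivariance}, not membership in $\widetilde{\mathcal E}_N(x)$.
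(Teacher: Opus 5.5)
Your proposal is correct and is essentially the paper's argument. Both rest on the symmetry of the fixed-$\lambda$ Bellman operator under $\Gamma$ (the bijection $a\mapsto\gamma a$ between executable sets, reward invariance, and the pushforward reading of the kernel condition) combined with the contraction fixed-point property; the differences are cosmetic: the paper works with $Q$, proves the equivariance $T_\lambda(Q^\gamma)=(T_\lambda Q)^\gamma$ and invokes uniqueness of the fixed point, whereas you work with $V$, show that $\mathcal T_\lambda$ preserves the closed subspace of $\Gamma$-invariant functions, and then transfer invariance to $Q^\star_\lambda$ through \eqref{eq:qstar_revised}.
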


\begin{proof}[Proof sketch]
Let $T_\lambda$ be the fixed-$\lambda$ Bellman optimality operator and define $Q^\gamma(x,a)=Q(\gamma x,\gamma a)$. Assumption~\ref{ass:equiv_deployment} gives the operator equivariance $T_\lambda(Q^\gamma)=(T_\lambda Q)^\gamma$. Since the discounted Bellman operator has a unique fixed point, $(Q_\lambda^\star)^\gamma=Q_\lambda^\star$, which yields \eqref{eq:q_equivariance}. If $\gamma\in\Gamma_x$, then $\gamma x=x$, so all actions in the same orbit have identical optimal action value at $x$.
\end{proof}

Theorem~\ref{thm:quotient_optimality} applies only to exact state--intent symmetries. Exact quotienting identifies actions that are value-equivalent under Assumption~\ref{ass:equiv_deployment}, whereas signature quotienting is approximate and its induced error must be quantified.

\subsection{Dominance Pruning}

Quotienting removes actions that are equivalent. Dominance pruning removes actions that are different but provably no better. The key point is dynamic: a candidate can be worse not only because it has smaller immediate Lagrangian reward, but also because it leaves a weaker residual network state for future decisions.

Let $\succeq_{\mathcal X}$ be a preorder over augmented states: $x^+\succeq_{\mathcal X}x^-$ means that $x^+$ has no smaller residual capacity, no larger queue pressure, no worse reliability, and the same or no harder intent class than $x^-$. Non-monotone coordinates for a specific deployment problem are excluded. For kernels at the same state, write $P_X(\cdot\mid x,a)\succeq_{\rm st}P_X(\cdot\mid x,b)$ if $\mathbb E[\psi(x')\mid x,a]\ge\mathbb E[\psi(x')\mid x,b]$ for every bounded increasing $\psi$ under $\succeq_{\mathcal X}$.

\begin{definition}[Deployment dominance]
\label{def:dominance}
For fixed $x$ and $\lambda$, action $a$ dominates $b$, written $a\succeq_{x,\lambda}b$, if
$
r_\lambda(x,a)\ge r_\lambda(x,b)
$
and
$
P_X(\cdot\mid x,a)\succeq_{\rm st}P_X(\cdot\mid x,b).
$
The strict relation $a\succ_{x,\lambda}b$ holds when at least one comparison is strict.
\end{definition}

\begin{assumption}[Monotone fixed-$\lambda$ value]
\label{ass:monotone_value}
For the fixed multiplier $\lambda$, the optimal value is increasing under $\succeq_{\mathcal X}$:
$
    x^+\succeq_{\mathcal X}x^-
    \quad\Longrightarrow\quad
    V_\lambda^\star(x^+)\ge V_\lambda^\star(x^-).
$
\end{assumption}
This is the standard monotone dynamic-programming condition: additional residual capacity, lower queue pressure, and better reliability cannot reduce future deployment opportunities \cite{puterman2014markov,smith2002structural,topkis1998supermodularity,jiang2015approximate}.

\begin{theorem}[Lossless dominance pruning]
\label{thm:dominance_reduction}
Under Assumption~\ref{ass:monotone_value}, if $a\succeq_{x,\lambda}b$, then
$
    Q_\lambda^\star(x,a)\ge Q_\lambda^\star(x,b).
$
Therefore, a strictly dominated action cannot be the unique optimal deployment at $x$.
\end{theorem}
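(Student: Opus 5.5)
The plan is to expand both action values with the fixed-$\lambda$ Bellman identity \eqref{eq:qstar_revised} and compare the two terms separately. For any action $c$ at state $x$,
\[
Q_\lambda^\star(x,c)=r_\lambda(x,c)+\gamma\,\E_{x'\sim P_X(\cdot\mid x,c)}\big[V_\lambda^\star(x')\big].
\]
By Definition~\ref{def:dominance}, $a\succeq_{x,\lambda}b$ gives two facts: $r_\lambda(x,a)\ge r_\lambda(x,b)$, and $P_X(\cdot\mid x,a)\succeq_{\rm st}P_X(\cdot\mid x,b)$.

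The reward comparison settles the first term immediately. For the continuation term, I will take the test function $\psi=V_\lambda^\star$ in the definition of $\succeq_{\rm st}$. Assumption~\ref{ass:monotone_value} says exactly that $V_\lambda^\star$ is increasing under $\succeq_{\mathcal X}$. It remains to check that $V_\lambda^\star$ is bounded, which follows because $u$ and $g$ are bounded. For fixed $\lambda$ this means $|r_\lambda|\le R_{\max}<\infty$, hence $|V_\lambda^\star|\le R_{\max}/(1-\gamma)$, and measurability is inherited from the Bellman fixed-point construction. The stochastic-order hypothesis then yields
\[
\E[V_\lambda^\star(x')\mid x,a]\ge\E[V_\lambda^\star(x')\mid x,b].
\]
Multiplying by $\gamma>0$ and adding the reward inequality gives $Q_\lambda^\star(x,a)\ge Q_\lambda^\star(x,b)$.

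For the second claim, I argue by contradiction. Suppose $b$ is the unique maximizer of $Q_\lambda^\star(x,\cdot)$ over the candidate set, and $a$ strictly dominates $b$. The first part gives $Q_\lambda^\star(x,a)\ge Q_\lambda^\star(x,b)$, so $a$ is also a maximizer, and $a\neq b$ because the relation is strict. This contradicts uniqueness. I will also note the conclusion actually used downstream: discarding $b$ preserves at least one optimal action, since whenever $b$ attains the maximum, so does $a$.

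I expect the only delicate step to be the legitimacy of plugging $V_\lambda^\star$ into the order. This requires that $V_\lambda^\star$ is bounded and increasing on the support of both kernels, and that the preorder $\succeq_{\mathcal X}$ compares states with the same intent class, as in its definition. I will state this explicitly. Everything else is direct substitution.
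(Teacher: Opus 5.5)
Your proposal is correct and follows the paper's own proof: expand both action values with the fixed-$\lambda$ Bellman equation, get the reward inequality from Definition~\ref{def:dominance}, and get the continuation inequality by using $V_\lambda^\star$, increasing by Assumption~\ref{ass:monotone_value}, as the test function in the stochastic order. Your extra checks fill in details the paper's sketch leaves implicit: boundedness of $V_\lambda^\star$ from bounded $u,g$; $a\neq b$ under strict dominance; and the dominating action lying in the same candidate set for the uniqueness claim.
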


\begin{proof}[Proof sketch]
By the Bellman equation,
$
Q_\lambda^\star(x,a)-Q_\lambda^\star(x,b)
=
r_\lambda(x,a)-r_\lambda(x,b)
+\gamma\Big(
\E[V_\lambda^\star(x')\mid x,a]
-\E[V_\lambda^\star(x')\mid x,b]
\Big).
$
The reward difference is nonnegative by Definition~\ref{def:dominance}. The expectation difference is nonnegative because Assumption~\ref{ass:monotone_value} makes $V_\lambda^\star$ increasing and the transition under $a$ stochastically dominates that under $b$. Hence $Q_\lambda^\star(x,a)\ge Q_\lambda^\star(x,b)$.
\end{proof}

Exact stochastic dominance is rarely available online. We therefore use a sufficient diagnostic certificate based on the normalized badness vector
$\mathbf b(x,a)=
\left(
\begin{smallmatrix}
\widetilde D(a;x),\, C^{\rm cost}(a;x),\, 1-R(a;x),\, \mathsf{Res}_{\rm law}(x,a)\\
-S^{\rm node}(x,a),\, -S^{\rm link}(x,a),\, L^{\rm path}(x,a),\, I^{\rm logic}(x,a)
\end{smallmatrix}
\right).$
All coordinates are normalized. Here $\widetilde D$ is normalized delay, $C^{\rm cost}$ is deployment cost, $S^{\rm node}$ and $S^{\rm link}$ are residual node and link slacks, $L^{\rm path}$ is routing burden, and $I^{\rm logic}$ is the policy-logic residual; $1-R$ and $\mathsf{Res}_{\rm law}$ measure unreliability and total law residual. Candidate $b$ is pruned if some $a$ satisfies $\mathbf b(x,a)\le \mathbf b(x,b)$ coordinate-wise and is strictly smaller in at least one coordinate. This certificate is exact only when the badness-vector order implies Definition~\ref{def:dominance}; otherwise its loss is charged to the dominance term in Section~\ref{sec:guarantees}.

For quotient classes, define
$
    c_a\succeq^q_{x,\lambda}c_b
    \quad\Longleftrightarrow\quad
    \operatorname{rep}_x(c_a)
    \succeq_{x,\lambda}
    \operatorname{rep}_x(c_b).
$
The non-dominated quotient frontier is
$
    \mathcal D_\lambda(x)
    =
    \operatorname{ND}_{\lambda,I}(\mathcal Q_\varepsilon(x))
    =
    \left\{
    c\in\mathcal Q_\varepsilon(x):
    \nexists c'\in\mathcal Q_\varepsilon(x)
    \text{ such that } c'\succ^q_{x,\lambda}c
    \right\}.
$

\begin{corollary}[Exact structural reduction]
\label{cor:exact_reduction}
Suppose residual screening is exact and returns at least one executable generated candidate, quotienting is exact, and dominance is exact on the finite generated candidate set. Under Assumptions~\ref{ass:equiv_deployment} and~\ref{ass:monotone_value}, reducing $\mathcal C_N(x)$ to
$
    \{\operatorname{rep}_x(c):
    c\in
    \operatorname{ND}_{\lambda,I}(\mathcal S_0(x)/\Gamma_x)\}
$
preserves at least one optimal generated executable deployment.
\end{corollary}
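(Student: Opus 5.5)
We chain the three exact reductions and show that each stage keeps at least one maximizer of $Q_\lambda^\star(x,\cdot)$ over the generated executable set $\mathcal E_N(x)$. Fix $x$ and $\lambda\ge0$. Since screening is exact and returns an executable candidate, $\mathcal S_0(x)=\mathcal F^I_0(\mathcal C_N(x))=\mathcal E_N(x)\neq\emptyset$. Because $\mathcal E_N(x)$ is finite, the value
\[
v^\star=\max_{a\in\mathcal E_N(x)}Q_\lambda^\star(x,a)
\]
is attained. Let $\mathcal M=\{a\in\mathcal S_0(x):Q_\lambda^\star(x,a)=v^\star\}$ denote the nonempty set of optimal generated executable deployments. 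The claim is that the reduced set meets $\mathcal M$.

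The first step is quotienting. By Theorem~\ref{thm:quotient_optimality}, $Q_\lambda^\star(x,\cdot)$ is constant on every $\Gamma_x$-orbit. This uses $\gamma x=x$ together with the fact that $\Gamma_x$ maps $\mathcal S_0(x)$ to itself, which follows from Assumption~\ref{ass:equiv_deployment} via $\widetilde{\mathcal E}_N(\gamma x)=\gamma\widetilde{\mathcal E}_N(x)$. Hence we may define $\bar Q(c)=Q_\lambda^\star(x,\operatorname{rep}_x(c))$ for $c\in\mathcal S_0(x)/\Gamma_x$. The representative of any class containing an element of $\mathcal M$ is itself in $\mathcal M$. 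It therefore suffices to show that some class $c$ with $\bar Q(c)=v^\star$ lies in $\operatorname{ND}_{\lambda,I}(\mathcal S_0(x)/\Gamma_x)$.

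The second step is dominance. By Theorem~\ref{thm:dominance_reduction}, $c'\succeq^q_{x,\lambda}c$ implies $\bar Q(c')\ge\bar Q(c)$. Choose an optimal class and suppose it is strictly dominated. Then every class dominating it has value $\ge v^\star$, hence exactly $v^\star$, so it is also optimal. Iterating this produces a chain $c_0\prec c_1\prec\cdots$ of optimal classes. Since the quotient set is finite, the chain must terminate at a non-dominated optimal class, unless it revisits a class.

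Excluding cycles is the main obstacle. The strict relation $\succ_{x,\lambda}$ built from a preorder could in principle cycle. The plan is to show that $\succ_{x,\lambda}$ is irreflexive and transitive, hence a strict partial order. Transitivity holds because both ingredients are transitive: the reward inequality trivially, and $\succeq_{\rm st}$ because it is defined by inequalities of expectations over the same class of test functions $\psi$. Strictness propagates because any strict component inequality is preserved when composed with weak ones. With a strict partial order on a finite set, every element lies below some maximal element, and maximal elements are exactly the members of $\operatorname{ND}_{\lambda,I}$.

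One gap remains: strictness in the stochastic order must be read as ``$\succeq_{\rm st}$ holds but not $\preceq_{\rm st}$''. If instead strictness is only required in one comparison, a cycle would force mutual weak dominance with some strict inequality, which is a contradiction under that reading. We adopt this reading to close the gap. The selected maximal class is optimal, so its representative lies in the reduced set and belongs to $\mathcal M$, which completes the argument.
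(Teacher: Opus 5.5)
Your proposal is correct and follows the same route as the paper's sketch: exact screening gives $\mathcal S_0(x)=\mathcal E_N(x)$, Theorem~\ref{thm:quotient_optimality} makes each representative value-equivalent to its orbit, and Theorem~\ref{thm:dominance_reduction} plus finiteness keeps an optimal class in $\operatorname{ND}_{\lambda,I}$. Your extra step, reading $\succ_{x,\lambda}$ as the strict part of the preorder $\succeq_{x,\lambda}$ so that it is acyclic, is what the paper's one-line ``finiteness'' step tacitly relies on: if strict stochastic dominance only meant unequal but mutually dominating kernels, two classes could strictly dominate each other and every optimal class could be pruned.
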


\begin{proof}[Proof sketch]
Exact residual screening removes only non-executable generated candidates. Theorem~\ref{thm:quotient_optimality} preserves the best value within each exact orbit, and Theorem~\ref{thm:dominance_reduction} rules out strictly dominated representatives as unique optima. Finiteness then guarantees that at least one best generated executable deployment remains.
\end{proof}

\subsection{Budgeted Composite-Priority Frontier}

The final stage enforces the online inference budget. It is not a feasibility certificate and is lossless only when the retained budget includes a near-optimal representative. Let $S(x,a)=S^{\rm node}(x,a)+S^{\rm link}(x,a)$, let $\widehat U_{\rm ub}(x,a)$ be the learned upper utility score, and take all weights $w_\cdot\ge0$. For finite representative set $U$, let $\operatorname{Top}^{\psi_\lambda}_{K_{\rm fr}}(U)$ return the $K_{\rm fr}$ candidates with smallest diagnostic priority, using deterministic tie-breaking:
$
\psi_\lambda(x,a)=
w_\phi\|\bar\phi(x,a)\|_1
+w_d\widetilde D(a;x)
+w_c C^{\rm cost}(a;x)
+w_r(1-R(a;x))
-w_s S(x,a)
+w_p L^{\rm path}(a;x)
-w_u\widehat U_{\rm ub}(x,a).
$
The budgeted frontier is
$
\mathcal B_\lambda(x)=
\operatorname{Top}^{\psi_\lambda}_{K}
(\{\operatorname{rep}_x(c):c\in\mathcal D_\lambda(x)\}).
$
If top-$K$ removes a high-value candidate, the loss is accounted for by $\Delta_K$ in Section~\ref{sec:guarantees}.

\subsection{Frontier Size and Ranking Speedup}

The previous subsections define what \LNOQRD{} removes. We now state the resulting reduction in expensive primal ranking calls. This is the formal speedup result for the dual not-to-optimize layer: the primal scorer is applied only to the retained frontier, while residuals, quotienting, and dominance use cheaper structural information produced during candidate evaluation.

Assume that $\mathcal F^I_\varepsilon(\mathcal C_N(x))$ is nonempty, finite, and invariant under $\Gamma_x$. Let
$
\operatorname{Stab}_x(a)=\{\gamma\in\Gamma_x:\gamma a=a\},
$
$
\kappa_\Gamma(x)=\max_{a\in\mathcal F^I_\varepsilon(\mathcal C_N(x))}|\operatorname{Stab}_x(a)|,
$
and
$
\alpha_{\rm law}(x)=
|\mathcal F^I_\varepsilon(\mathcal C_N(x))|/|\mathcal C_N(x)|,\quad
\alpha_{\rm nd}(x)=|\mathcal D_\lambda(x)|/|\mathcal Q_\varepsilon(x)|.
$
Set $\kappa_\lambda(x)=\alpha_{\rm law}(x)\alpha_{\rm nd}(x)\kappa_\Gamma(x)$.

\begin{theorem}[Frontier complexity and ranking speedup]
\label{thm:frontier_size}
For every state--intent pair $x=(s,I)$ with nonempty $\mathcal F^I_\varepsilon(\mathcal C_N(x))$,
\begin{equation}
    |\mathcal B_\lambda(x)|
    \le
    \min\left\{
    K,\;
    \frac{\kappa_\lambda(x)}{|\Gamma_x|}
    |\mathcal C_N(x)|
    \right\}.
    \label{eq:frontier_size_bound}
\end{equation}
Suppose evaluating the primal ranking model on one candidate costs $c_E$, and let $T_{\rm red}(x)$ be the online cost of residual evaluation, quotienting, dominance testing, and top-$K$ selection after candidates have been generated. Raw primal ranking costs
$
    T_{\rm raw}(x)=c_E|\mathcal C_N(x)|.
$
Ranking after \LNOQRD{} costs at most
$
T_{\rm LNO}(x)
\le
T_{\rm red}(x)
+c_E
\min\left\{
K_{\rm fr},\;
\frac{\kappa_\lambda(x)}{|\Gamma_x|}
|\mathcal C_N(x)|
\right\}.
$
Thus, \LNOQRD{} reduces online ranking cost whenever
\begin{equation}
    T_{\rm red}(x)
    <
    c_E
    \left(
    |\mathcal C_N(x)|
    -
    \min\left\{
    K_{\rm fr},\;
    \frac{\kappa_\lambda(x)}{|\Gamma_x|}
    |\mathcal C_N(x)|
    \right\}
    \right).
    \label{eq:lno_speedup_condition}
\end{equation}
If $\mathcal F^I_\varepsilon(\mathcal C_N(x))=\emptyset$, then $\mathcal B_\lambda(x)=\{\bot\}$ and no expensive primal ranking over generated deployments is required.
\end{theorem}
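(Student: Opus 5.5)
The argument is a chain of cardinality counts, one per stage of the pipeline $\mathcal C_N(x)\to\mathcal S_\varepsilon(x)\to\mathcal Q_\varepsilon(x)\to\mathcal D_\lambda(x)\to\mathcal B_\lambda(x)$, followed by a direct cost accounting. Since $\mathcal F^I_\varepsilon(\mathcal C_N(x))$ is assumed nonempty, \eqref{eq:lno_screened_set} gives $\mathcal S_\varepsilon(x)=\mathcal F^I_\varepsilon(\mathcal C_N(x))$. By the definition of $\alpha_{\rm law}$, this means $|\mathcal S_\varepsilon(x)|=\alpha_{\rm law}(x)|\mathcal C_N(x)|$. The budgeted frontier is the output of $\operatorname{Top}_K$, so $|\mathcal B_\lambda(x)|\le K$. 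It is also a subset of the representatives of $\mathcal D_\lambda(x)$, and $\operatorname{rep}_x$ is injective because distinct classes are disjoint. Hence $|\mathcal B_\lambda(x)|\le|\mathcal D_\lambda(x)|=\alpha_{\rm nd}(x)|\mathcal Q_\varepsilon(x)|$. What remains is to bound the number of orbits $|\mathcal Q_\varepsilon(x)|$.

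\textbf{Main step (orbit counting).} Because $\mathcal S_\varepsilon(x)$ is finite and $\Gamma_x$-invariant, $\Gamma_x$ acts on it and partitions it into orbits. By the orbit--stabilizer theorem, each orbit $\Gamma_x a$ has size $|\Gamma_x|/|\operatorname{Stab}_x(a)|\ge|\Gamma_x|/\kappa_\Gamma(x)$, where the lower bound uses the definition of $\kappa_\Gamma$ as the maximum stabilizer size over the screened set. Summing over orbits gives $|\mathcal S_\varepsilon(x)|\ge|\mathcal Q_\varepsilon(x)|\,|\Gamma_x|/\kappa_\Gamma(x)$, that is,
\[
|\mathcal Q_\varepsilon(x)|\le \frac{\kappa_\Gamma(x)}{|\Gamma_x|}\,\alpha_{\rm law}(x)|\mathcal C_N(x)|.
\]
Chaining this with the previous paragraph gives $|\mathcal B_\lambda(x)|\le \alpha_{\rm law}\alpha_{\rm nd}\kappa_\Gamma|\mathcal C_N(x)|/|\Gamma_x|$, and together with $|\mathcal B_\lambda(x)|\le K$ this is \eqref{eq:frontier_size_bound}. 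This orbit-counting step is the only nontrivial one. The subtle point is that the invariance hypothesis is exactly what lets $\Gamma_x$ act on the screened set, so orbits stay inside it. I will also note explicitly that $K$ and $K_{\rm fr}$ denote the same budget.

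\textbf{Cost accounting.} Raw ranking applies the primal scorer to every generated candidate, which costs $c_E|\mathcal C_N(x)|$. After \LNOQRD{}, the scorer is applied only to $\mathcal B_\lambda(x)$, and the other work (residuals, quotienting, dominance, top-$K$) is charged to $T_{\rm red}(x)$. This gives $T_{\rm LNO}(x)\le T_{\rm red}(x)+c_E|\mathcal B_\lambda(x)|$, and substituting the frontier bound yields the stated inequality. The speedup condition \eqref{eq:lno_speedup_condition} is then the rearrangement of $T_{\rm red}+c_E\min\{\cdot\}<c_E|\mathcal C_N(x)|$, which is sufficient for $T_{\rm LNO}<T_{\rm raw}$.

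\textbf{Empty case.} If $\mathcal F^I_\varepsilon(\mathcal C_N(x))=\emptyset$, then \eqref{eq:lno_screened_set} gives $\mathcal S_\varepsilon=\{\bot\}$. Since $\gamma\bot=\bot$, the quotient is the single class $\{\bot\}$, which cannot be dominated by anything. Its representative is $\bot$ by convention, so $\mathcal B_\lambda(x)=\{\bot\}$. No generated deployment remains to be scored, so primal ranking over generated deployments is unnecessary.
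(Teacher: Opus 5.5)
Your proposal is correct and follows the paper's own argument. Orbit--stabilizer gives each orbit of the $\Gamma_x$-invariant screened set size at least $|\Gamma_x|/\kappa_\Gamma(x)$, so $|\mathcal Q_\varepsilon(x)|\le\kappa_\Gamma(x)|\mathcal F^I_\varepsilon(\mathcal C_N(x))|/|\Gamma_x|$; substituting $\alpha_{\rm law}$ and $\alpha_{\rm nd}$, applying the top-$K$ cap, and charging $c_E$ only to $\mathcal B_\lambda(x)$ then gives both bounds, with your fallback-case argument and the remark that $K=K_{\rm fr}$ filling in details the paper's sketch leaves implicit.
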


\begin{proof}[Proof sketch]
For any feasible generated candidate, orbit--stabilizer gives
$
|[a]_x|=|\Gamma_x|/|\operatorname{Stab}_x(a)|
\ge |\Gamma_x|/\kappa_\Gamma(x).
$
Since $\mathcal F^I_\varepsilon(\mathcal C_N(x))$ is $\Gamma_x$-invariant,
$
|\mathcal Q_\varepsilon(x)|
\le
\kappa_\Gamma(x)|\mathcal F^I_\varepsilon(\mathcal C_N(x))|/|\Gamma_x|.
$
Substituting the retained law fraction $\alpha_{\rm law}$ and non-dominated fraction $\alpha_{\rm nd}$, then applying the top-$K$ cap, gives \eqref{eq:frontier_size_bound}. The cost statement follows because the expensive primal scorer is evaluated only on $\mathcal B_\lambda(x)$.
\end{proof}

\begin{corollary}[Predictive compression bound]
\label{cor:predictive_bound}
Suppose that, over a class of states, residual screening keeps at most a fraction $p_{\rm law}$ of generated candidates, non-dominated filtering keeps at most a fraction $p_{\rm nd}$ of quotient classes, and every action stabilizer has size at most $s$. Then
\begin{equation}
    |\mathcal B_\lambda(x)|
    \le
    \min\left\{
    K_{\rm fr},\;
    \frac{p_{\rm law}p_{\rm nd}s}{|\Gamma_x|}
    |\mathcal C_N(x)|
    \right\}.
    \label{eq:predictive_size_bound}
\end{equation}
\end{corollary}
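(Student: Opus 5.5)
The plan is to derive Corollary~\ref{cor:predictive_bound} directly from Theorem~\ref{thm:frontier_size}. The only work is to show that the state-dependent quantity $\kappa_\lambda(x)=\alpha_{\rm law}(x)\alpha_{\rm nd}(x)\kappa_\Gamma(x)$ is bounded above by the uniform constant $p_{\rm law}p_{\rm nd}s$ for every $x$ in the class.

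Fix a state $x=(s,I)$ in the class. As in Theorem~\ref{thm:frontier_size}, assume $\mathcal F^I_\varepsilon(\mathcal C_N(x))$ is nonempty, finite and $\Gamma_x$-invariant. In the empty case $\mathcal B_\lambda(x)=\{\bot\}$, and this case is handled separately by the final clause of that theorem. The argument then proceeds in three steps.

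First, the hypothesis that screening keeps at most a fraction $p_{\rm law}$ of generated candidates reads $|\mathcal F^I_\varepsilon(\mathcal C_N(x))|\le p_{\rm law}|\mathcal C_N(x)|$, that is, $\alpha_{\rm law}(x)\le p_{\rm law}$.

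Second, the hypothesis that non-dominated filtering keeps at most a fraction $p_{\rm nd}$ of quotient classes gives $|\mathcal D_\lambda(x)|\le p_{\rm nd}|\mathcal Q_\varepsilon(x)|$, that is, $\alpha_{\rm nd}(x)\le p_{\rm nd}$.

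Third, every action stabilizer has size at most $s$. In particular, $|\operatorname{Stab}_x(a)|\le s$ for each $a\in\mathcal F^I_\varepsilon(\mathcal C_N(x))$. Taking the maximum gives $\kappa_\Gamma(x)\le s$.

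All three factors are nonnegative, so the inequalities multiply to $\kappa_\lambda(x)\le p_{\rm law}p_{\rm nd}s$. Substituting into \eqref{eq:frontier_size_bound} gives
\[
|\mathcal B_\lambda(x)|\le\min\Big\{K_{\rm fr},\tfrac{\kappa_\lambda(x)}{|\Gamma_x|}|\mathcal C_N(x)|\Big\}\le\min\Big\{K_{\rm fr},\tfrac{p_{\rm law}p_{\rm nd}s}{|\Gamma_x|}|\mathcal C_N(x)|\Big\},
\]
because $\min\{K_{\rm fr},\cdot\}$ is nondecreasing in its second argument. Here the budget $K$ appearing in \eqref{eq:frontier_size_bound} is identified with $K_{\rm fr}$, as in the cost statement of the same theorem.

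The main point needing care is the ratio $\alpha_{\rm nd}(x)$. Its denominator is $|\mathcal Q_\varepsilon(x)|$, and the hypothesis must be read relative to that same quotient set for the product to telescope correctly. This is what makes the chain $|\mathcal D_\lambda|\le p_{\rm nd}|\mathcal Q_\varepsilon|\le p_{\rm nd}\,s\,|\mathcal F^I_\varepsilon|/|\Gamma_x|$ valid, where the second inequality is the orbit--stabilizer step already used in Theorem~\ref{thm:frontier_size}. Once the fractions are read this way, the corollary follows immediately.
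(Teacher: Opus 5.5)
Your proof is correct and matches the argument the paper intends. The paper gives no separate proof of this corollary; it is the direct substitution of the uniform bounds $\alpha_{\rm law}(x)\le p_{\rm law}$, $\alpha_{\rm nd}(x)\le p_{\rm nd}$, $\kappa_\Gamma(x)\le s$ into \eqref{eq:frontier_size_bound}, using monotonicity of the minimum.

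One small correction concerns the fallback case. The theorem's final clause does not cover it: there $|\mathcal B_\lambda(x)|=1$, while the right-hand side of \eqref{eq:predictive_size_bound} can fall below $1$ (e.g.\ $p_{\rm law}=0$). The corollary should therefore be read under the same standing nonemptiness assumption as Theorem~\ref{thm:frontier_size}, not as covering the empty case.
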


Corollary~\ref{cor:predictive_bound} characterizes predictive compression through the residual feasibility rate, quotient redundancy rate, non-dominated frontier fraction, final ranking budget, and measured reduction overhead.

\subsection{Intent-Conditioned Deployment Energy}

After \LNOQRD{} constructs $\mathcal B_\lambda(x)$, the primal learner ranks only this frontier using the deployment energy
$
\mathcal E_\theta(x,a)
=
\widehat C_\theta(x,a)
+\lambda^\top\widehat G_\theta(x,a)
+\eta_{\rm law}\mathsf{Res}_{\rm law}(x,a),
$
where $\widehat C_\theta$ is a learned negative utility-to-go proxy, $\widehat G_\theta$ estimates discounted soft CMDP cost, and the hard-law residual remains an executor-facing signal. The reduced Boltzmann policy is
\begin{equation}
    \pi_\theta(a\mid x)
    =
    \frac{\exp(-\mathcal E_\theta(x,a)/\tau)}
    {\sum_{b\in\mathcal B_\lambda(x)}
    \exp(-\mathcal E_\theta(x,b)/\tau)},
    \qquad
    a\in\mathcal B_\lambda(x).
    \label{eq:energy_policy_revised}
\end{equation}
At evaluation, the controller may take $a_t\in\arg\min_{a\in\mathcal B_\lambda(x)}\mathcal E_\theta(x,a)$.
Here $\tau>0$ is the policy temperature and $\eta_{\rm law}\ge0$ controls residual-aware energy shaping. Consequently, the energy model focuses on value differences within the reduced frontier rather than repeatedly distinguishing candidates already excluded by explicit structural certificates.

The energy model uses permutation-invariant or equivariant encoders for candidate sets, physical graphs, and service graphs \cite{zaheer2017deep,lee2019set,vaswani2017attention,kipf2016semi,velivckovic2017graph}. 
It is trained to respect the network symmetry group through 
$\mathcal E_\theta(\gamma x,\gamma a)=\mathcal E_\theta(x,a)$ for $\gamma\in\Gamma_x$. 
The learning objective augments PPO with value fitting and three structured regularizers: 
a soft-cost regression term $\mathcal L_g$, a certified-order ranking term $\mathcal L_{\rm ord}$, and a symmetry-consistency term $\mathcal L_{\rm inv}$. 
The resulting objective is
$
\mathcal L
=
\mathcal L_{\rm PPO}
+c_v\mathcal L_V
+\beta_g\mathcal L_g
+\beta_{\rm ord}\mathcal L_{\rm ord}
+\beta_{\rm inv}\mathcal L_{\rm inv}
-\beta_H\E[H(\pi_\theta(\cdot\mid x))].
$
Here,
$
\mathcal L_g=\E[\|\widehat G_\theta(x,a)-G^\pi(x,a)\|_2^2],
$
$
\mathcal L_{\rm ord}
=
\E\!\left[\sum_{a\succ b}
[m+\mathcal E_\theta(x,a)-\mathcal E_\theta(x,b)]_+\right],
$
and
$
\mathcal L_{\rm inv}
=
\E_{\gamma\in\Gamma_x}
[(\mathcal E_\theta(\gamma x,\gamma a)-\mathcal E_\theta(x,a))^2].
$

\begin{algorithm}[t]
\caption{\LNOQRD{} decision step}
\label{alg:qold}
\begin{algorithmic}[1]
\REQUIRE state--intent pair $x$, candidate budget $N$, frontier budget $K_{\rm fr}$, parameters $\theta$, multiplier $\lambda$
\STATE $\mathcal C \gets \textsc{GenerateCandidates}(x,N)$
\STATE $\mathcal S \gets \{a\in\mathcal C:\mathsf{Res}_{\rm law}(x,a)\le\varepsilon\}$
\IF{$\mathcal S=\emptyset$}
    \STATE $\mathcal B \gets \{\bot\}$
\ELSE
    \STATE $\mathcal Q \gets \textsc{Quotient}(\mathcal S,\Gamma_x)$ using exact or signature-based state--intent canonicalization
    \STATE $\mathcal D \gets \textsc{NonDominated}(\mathcal Q,\lambda)$ using exact or certified approximate dominance
    \STATE $\mathcal B \gets \operatorname{Top}^{\psi_\lambda}_{K}(\{\operatorname{rep}_x(c):c\in\mathcal D\})$
\ENDIF
\STATE Choose $a_t\sim\pi_\theta(\cdot\mid x)$ on $\mathcal B$ during training, or $a_t\in\arg\min_{a\in\mathcal B}\mathcal E_\theta(x,a)$ during execution
\STATE Revalidate or repair $a_t$ under the exact executor laws; map an unsuccessful repair to $\bot$
\STATE Execute the resulting action and update $\theta,\lambda$ from reward, soft costs, residuals, and next state
\ENSURE frontier $\mathcal B_\lambda(x)=\mathcal B$ and not-to-optimize set $\mathcal N^{\rm no}_\lambda(x)=\mathcal C\setminus\mathcal B$
\end{algorithmic}
\end{algorithm}

\section{Performance Guarantees}
\label{sec:guarantees}

This section accounts for what is lost after \LNOQRD{} reduces the generated candidate set. Exact residual screening, exact quotienting, and exact dominance are lossless in the senses proved above, while practical \LNOQRD{} may incur loss from finite generation, tolerance, signature quotienting, approximate dominance, top-$K$ truncation, and learned energies. We first decompose frontier loss by stage, then bound the effect of intermediate primal estimates, and finally convert frontier coverage plus energy approximation into fixed-$\lambda$ policy performance.

Throughout, $\lambda\ge0$ is fixed; multiplier learning can be handled by standard primal--dual updates around this analysis \cite{altman2021constrained}. Let $\mathcal A_{\rm bench}(x)$ be a benchmark action domain containing $\mathcal C_N(x)\cup\{\bot\}$ on which $Q_\lambda^\star(x,a)$ is defined. It may be either the generated interface with fallback or a larger executable oracle domain, in which case the first loss term is the finite-generation gap.

\subsection{Stagewise Frontier Error Accounting}

The frontier is constructed through nonempty sets
$
U_0(x)=\mathcal A_{\rm bench}(x), \quad
U_1(x)=\mathcal C_N(x)\cup\{\bot\}, \quad
U_2(x)=\mathcal S_\varepsilon(x), \quad
U_3(x)=\{\operatorname{rep}_x(c):c\in\mathcal Q_\varepsilon(x)\}, \quad
U_4(x)=\{\operatorname{rep}_x(c):c\in\mathcal D_\lambda(x)\}, \quad
U_5(x)=\mathcal B_\lambda(x).
$
Here $U_1$ is the generated interface with fallback, $U_2$ is residual-screened, $U_3$ is quotient-representative, $U_4$ is non-dominated, and $U_5$ is budgeted. If screening returns $\{\bot\}$, then $U_3=U_4=U_5=\{\bot\}$.

For $i=1,\ldots,5$, define the stage loss
\begin{equation}
    \Delta_i(x)
    =
    \max_{a\in U_{i-1}(x)}Q_\lambda^\star(x,a)
    -
    \max_{a\in U_i(x)}Q_\lambda^\star(x,a).
    \label{eq:delta_i}
\end{equation}

Write $\Delta_{\rm gen}=\Delta_1$, $\Delta_{\rm res}=\Delta_2$, $\Delta_{\rm q}=\Delta_3$, $\Delta_{\rm dom}=\Delta_4$, $\Delta_K=\Delta_5$, and $\delta_{\rm fr}(x)=\sum_{i=1}^5\Delta_i(x)$.

\begin{theorem}[Frontier error accounting]
\label{thm:frontier_accounting}
For every state--intent pair $x=(s,I)$ with nonempty $U_i(x)$,
\begin{equation}
    \max_{a\in\mathcal A_{\rm bench}(x)}
    Q_\lambda^\star(x,a)
    -
    \max_{a\in\mathcal B_\lambda(x)}
    Q_\lambda^\star(x,a)
    =
    \delta_{\rm fr}(x).
    \label{eq:frontier_preservation}
\end{equation}
Moreover, exact residual screening contributes no loss from feasible generated candidates, exact quotienting gives $\Delta_{\rm q}(x)=0$, exact dominance gives $\Delta_{\rm dom}(x)=0$, and a budget rule that retains a near-best representative gives small $\Delta_K(x)$.
\end{theorem}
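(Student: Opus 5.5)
The identity \eqref{eq:frontier_preservation} is a telescoping sum, so the plan is to handle it first and then treat the four ``moreover'' claims one stage at a time. Write $M_i(x)=\max_{a\in U_i(x)}Q_\lambda^\star(x,a)$ for $i=0,\dots,5$. Each $U_i(x)$ is nonempty and finite, except possibly $U_0$, where we assume, as in the definition of $\mathcal A_{\rm bench}$, that the maximum is attained. Hence every $M_i$ is a well-defined real number. By \eqref{eq:delta_i}, $\Delta_i=M_{i-1}-M_i$, and therefore
\[
\delta_{\rm fr}(x)=\sum_{i=1}^5 (M_{i-1}-M_i)=M_0-M_5 .
\]
This is exactly the left side of \eqref{eq:frontier_preservation}, since $U_0=\mathcal A_{\rm bench}$ and $U_5=\mathcal B_\lambda$. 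I will also note that $\Delta_1,\Delta_3,\Delta_4,\Delta_5\ge 0$ whenever $U_i\subseteq U_{i-1}$. The inclusion fails for $\Delta_2$ only through the fallback $\bot\in U_2\setminus U_1$ in the empty case, and $\bot\in U_1$ anyway.

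For the stagewise claims I will argue as follows.

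\emph{Residual screening.} With $\varepsilon=0$, the set $\mathcal F^I_0(\mathcal C_N)$ equals $\mathcal E_N(x)$. The candidates removed from $U_1$ are exactly the non-executable generated ones, together with $\bot$ when $\mathcal E_N\ne\emptyset$. So $M_2$ is the maximum over $\widetilde{\mathcal E}_N(x)$, and any gap $\Delta_2$ arises only from dropping non-executable candidates or $\bot$, never from dropping a feasible generated candidate. This is how I read the phrase ``no loss from feasible generated candidates.''

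\emph{Quotienting.} Take a maximizer $a^\star\in U_2$. Its class $c=[a^\star]_x$ lies in $\mathcal Q_\varepsilon$, and $\operatorname{rep}_x(c)=\gamma a^\star$ for some $\gamma\in\Gamma_x$. By Theorem~\ref{thm:quotient_optimality}, $Q_\lambda^\star(x,\gamma a^\star)=Q_\lambda^\star(x,a^\star)$. Hence $M_3\ge M_2$, and together with $U_3\subseteq U_2$ this gives $\Delta_{\rm q}=0$.

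\emph{Dominance.} Among $U_3$, pick a maximizer $a^\star$ of $Q_\lambda^\star$ (finitely many), and let $c$ be its class. If $c$ is strictly dominated, follow a chain $c\prec c'\prec c''\prec\cdots$ of strict dominations. By Theorem~\ref{thm:dominance_reduction}, $Q^\star$ does not decrease along the chain, so every element of the chain is also a maximizer.

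The main obstacle is showing that this chain terminates in an undominated class. A strict preorder on a finite set is acyclic as long as strictness is genuine: $c\succ c'$ should exclude $c'\succeq c$. Under Definition~\ref{def:dominance}, however, strictness only means that one comparison is strict, and with the stochastic order a cycle in which each step is strict in a different coordinate is conceivable. I will therefore argue via a maximal element of the finite preorder restricted to the maximizer set. Its endpoint lies in $\mathcal D_\lambda(x)$ and attains $M_3$, which gives $\Delta_{\rm dom}=0$. Where needed, I will assume acyclicity, which holds for the coordinatewise badness certificate.

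\emph{Budget truncation.} $\Delta_K$ is small by definition when $\mathcal B_\lambda$ retains a representative whose $Q^\star$ is within the desired tolerance of $M_4$.

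Finally, in the fallback case $U_3=U_4=U_5=\{\bot\}$, so $\Delta_3=\Delta_4=\Delta_5=0$ and the whole gap is charged to $\Delta_1$ and $\Delta_2$.
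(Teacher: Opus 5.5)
Your proposal is correct and follows essentially the paper's route: you telescope the stage losses, obtain $\Delta_{\rm q}=0$ from Theorem~\ref{thm:quotient_optimality} applied to the representative of a maximizer's orbit, and obtain $\Delta_{\rm dom}=0$ from Theorem~\ref{thm:dominance_reduction} plus finiteness, which you spell out in more detail than the paper's sketch. You can drop the acyclicity caveat, provided strictness in the stochastic comparison is read as a strict inequality for some increasing $\psi$: every step of $\succeq_{x,\lambda}$ requires both the reward and the stochastic comparison to hold weakly, so around any cycle all rewards and all expectations $\mathbb{E}[\psi(x')]$ coincide and no step can be strict; hence $\succ_{x,\lambda}$ is the strict part of a preorder, and a maximal element always exists on the finite set $\mathcal Q_\varepsilon(x)$.
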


\begin{proof}[Proof sketch]
The identity is the telescoping sum of the five differences in \eqref{eq:delta_i}. Exact residual screening removes only non-executable generated candidates, exact quotienting gives $\Delta_{\rm q}(x)=0$ by Theorem~\ref{thm:quotient_optimality}, and exact dominance gives $\Delta_{\rm dom}(x)=0$ by Theorem~\ref{thm:dominance_reduction}. The remaining terms correspond to finite generation, tolerance or fallback effects, and budget truncation.
\end{proof}

Accordingly, any frontier-quality loss is isolated in a named stage term, motivating the reported reduction gaps, near-oracle coverage, violation, top-$K$ loss, scalability, and overhead.

\subsection{Effect of Intermediate Primal Estimation Error}

Approximate stages may rely on critic estimates, negative-energy proxies, learned soft-cost predictions, or state--intent embeddings. The next result bounds how such intermediate primal errors affect frontier loss.

Let $\widehat Q_\theta(x,a)$ denote the estimate used by approximate quotienting, dominance, or top-$K$ selection. Assume that, on all candidates entering the reduction pipeline,
\begin{equation}
    \left|
    \widehat Q_\theta(x,a)-Q_\lambda^\star(x,a)
    \right|
    \le
    \epsilon_Q .
    \label{eq:critic_uniform_error}
\end{equation}
For a reduction stage $U_{i-1}(x)\to U_i(x)$, define the estimated slack $\eta_i(x)$ by
\begin{equation}
    \max_{a\in U_{i-1}(x)}
    \widehat Q_\theta(x,a)
    -
    \max_{a\in U_i(x)}
    \widehat Q_\theta(x,a)
    \le
    \eta_i(x).
    \label{eq:estimated_stage_slack}
\end{equation}
A zero value of $\eta_i(x)$ means that the stage keeps an estimated best candidate. A positive value allows approximate screening, approximate signatures, approximate dominance, or top-$K$ truncation.

\begin{theorem}[Intermediate-signal error bound]
\label{thm:intermediate_signal_error}
Assume \eqref{eq:critic_uniform_error}. For any stage $U_{i-1}(x)\to U_i(x)$ satisfying \eqref{eq:estimated_stage_slack},
\begin{equation}
    \Delta_i(x)
    \le
    \eta_i(x)+2\epsilon_Q .
    \label{eq:stage_error_bound}
\end{equation}
Consequently, if the approximate reduction stages $i=2,3,4,5$ satisfy \eqref{eq:estimated_stage_slack}, then
\begin{equation}
    \delta_{\rm fr}(x)
    \le
    \Delta_{\rm gen}(x)
    +
    \sum_{i=2}^{5}\eta_i(x)
    +
    8\epsilon_Q .
    \label{eq:frontier_error_from_primal_signal}
\end{equation}
Exact lossless stages contribute zero to the corresponding true loss $\Delta_i(x)$; their estimated slack $\eta_i(x)$ need not vanish.
\end{theorem}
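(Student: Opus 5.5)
The plan is a sandwich argument: pass from true values $Q_\lambda^\star$ to estimates $\widehat Q_\theta$ on the larger set $U_{i-1}(x)$, use the estimated slack \eqref{eq:estimated_stage_slack}, then pass back to true values on $U_i(x)$. All sets are finite and nonempty, so every maximum below is attained.

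For the single-stage bound \eqref{eq:stage_error_bound}, let $a^\star\in\arg\max_{a\in U_{i-1}(x)}Q_\lambda^\star(x,a)$. Then
\begin{equation*}
\max_{U_{i-1}}Q_\lambda^\star=Q_\lambda^\star(x,a^\star)\le \widehat Q_\theta(x,a^\star)+\epsilon_Q\le \max_{U_{i-1}}\widehat Q_\theta+\epsilon_Q .
\end{equation*}
By \eqref{eq:estimated_stage_slack}, the right-hand side is at most $\max_{U_i}\widehat Q_\theta+\eta_i(x)+\epsilon_Q$. Now let $\hat b\in\arg\max_{a\in U_i(x)}\widehat Q_\theta(x,a)$. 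Then $\widehat Q_\theta(x,\hat b)\le Q_\lambda^\star(x,\hat b)+\epsilon_Q\le\max_{U_i}Q_\lambda^\star+\epsilon_Q$. Chaining these and subtracting gives $\Delta_i(x)\le\eta_i(x)+2\epsilon_Q$.

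The main obstacle is that \eqref{eq:critic_uniform_error} is assumed only ``on all candidates entering the reduction pipeline.'' The argument needs the estimate to be valid at $a^\star\in U_{i-1}(x)$ and at $\hat b\in U_i(x)$. For $i\ge2$ this holds: $U_{i-1}$ and $U_i$ are subsets of $U_1=\mathcal C_N(x)\cup\{\bot\}$, because the representatives $\operatorname{rep}_x(c)$ lie in $\mathcal S_\varepsilon(x)\subseteq U_1$. I will interpret \eqref{eq:critic_uniform_error} as covering $U_1$, including $\bot$. Under that reading, the argument applies uniformly to stages $2$ through $5$, including the fallback cases where a set collapses to $\{\bot\}$. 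Stage $1$ is excluded because the benchmark domain may lie outside the pipeline, and that is why $\Delta_{\rm gen}$ is kept separate.

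For \eqref{eq:frontier_error_from_primal_signal}, apply Theorem~\ref{thm:frontier_accounting}, which gives $\delta_{\rm fr}=\Delta_{\rm gen}+\sum_{i=2}^5\Delta_i$. Bounding each of the four terms by \eqref{eq:stage_error_bound} yields $\Delta_{\rm gen}+\sum_{i=2}^5\eta_i+4\cdot2\epsilon_Q$, which is the stated $8\epsilon_Q$. For the final remark, note two things. When a stage is exact, Theorem~\ref{thm:frontier_accounting} (via Theorems~\ref{thm:quotient_optimality} and~\ref{thm:dominance_reduction}) gives $\Delta_i=0$ directly. The slack $\eta_i$, however, is defined through $\widehat Q_\theta$, and an exact stage may drop a candidate that the estimate overrates, so $\eta_i$ can be positive. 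A short example makes this concrete: take $\widehat Q_\theta$ that overrates a strictly dominated candidate by up to $2\epsilon_Q$ relative to its dominator.
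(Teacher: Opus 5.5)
Your proposal is correct and follows the paper's proof essentially verbatim. It uses the same three-inequality sandwich through $\widehat Q_\theta$ (your $\hat b$ is the paper's $\bar a$) and then sums the four stage bounds to obtain $8\epsilon_Q$. Your extra care goes beyond the paper's sketch: you check that $U_2,\dots,U_5\subseteq U_1$, so that \eqref{eq:critic_uniform_error} must be read as covering $U_1=\mathcal C_N(x)\cup\{\bot\}$ including $\bot$, and you sketch why an exact stage can still have $\eta_i(x)>0$.
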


\begin{proof}[Proof sketch]
Fix a stage $U_{i-1}\to U_i$. Let
$
    a^\star
    \in
    \operatorname*{argmax}_{a\in U_{i-1}(x)}
    Q_\lambda^\star(x,a),
    \qquad
    \bar a
    \in
    \operatorname*{argmax}_{a\in U_i(x)}
    \widehat Q_\theta(x,a).
$
By \eqref{eq:critic_uniform_error},
$
    Q_\lambda^\star(x,a^\star)
    \le
    \widehat Q_\theta(x,a^\star)+\epsilon_Q
    \le
    \max_{a\in U_{i-1}(x)}\widehat Q_\theta(x,a)+\epsilon_Q.
$
By \eqref{eq:estimated_stage_slack},
$
    \max_{a\in U_{i-1}(x)}\widehat Q_\theta(x,a)
    \le
    \widehat Q_\theta(x,\bar a)+\eta_i(x).
$
Again using \eqref{eq:critic_uniform_error},
$
    \widehat Q_\theta(x,\bar a)
    \le
    Q_\lambda^\star(x,\bar a)+\epsilon_Q
    \le
    \max_{a\in U_i(x)}Q_\lambda^\star(x,a)+\epsilon_Q.
$
Combining the three inequalities gives \eqref{eq:stage_error_bound}. Summing the four approximate-stage bounds and adding the generation term gives \eqref{eq:frontier_error_from_primal_signal}.
\end{proof}

This theorem identifies the explicit error channel created by learning not to optimize. If the primal estimates are accurate and each approximate reduction stage keeps a candidate with near-best estimated value, the true frontier loss remains small. If a signature quotient or approximate dominance rule removes a high-value candidate, the loss appears in the corresponding $\eta_i(x)$ term.

\subsection{Reduced-Policy Performance Bound}

We now convert frontier coverage and energy approximation into a value bound. Let $N_{\max}=\sup_x|\mathcal B_\lambda(x)|<\infty$; by Theorem~\ref{thm:frontier_size}, $N_{\max}\le K_{\rm fr}$ when an executable candidate is found and $N_{\max}=1$ under fallback.

\begin{assumption}[Uniform reduced-frontier coverage]
\label{ass:approx_coverage}
There exists $\delta\ge0$ such that, for every state $x$,
$
\max_{a\in\mathcal A_{\rm bench}(x)}Q_\lambda^\star(x,a)
-
\max_{a\in\mathcal B_\lambda(x)}Q_\lambda^\star(x,a)
\le\delta .
$
\end{assumption}
Theorem~\ref{thm:frontier_accounting} decomposes $\delta$ by reduction stage, and Theorem~\ref{thm:intermediate_signal_error} bounds the contribution of approximate primal signals.

\begin{assumption}[Uniform energy approximation]
\label{ass:energy_approx}
There exists $\epsilon\ge0$ such that, for every $x$ and $a\in\mathcal B_\lambda(x)$,
$
|-\mathcal E_\theta(x,a)-Q_\lambda^\star(x,a)|\le\epsilon .
$
\end{assumption}

\begin{theorem}[Approximate reduced-policy performance]
\label{thm:approx_reduced_policy}
Under Assumptions~\ref{ass:approx_coverage} and~\ref{ass:energy_approx}, the Boltzmann policy \eqref{eq:energy_policy_revised} satisfies, for every initial state $x$,
\begin{equation}
    V_\lambda^\star(x)-V_\lambda^{\pi_\theta}(x)
    \le
    \frac{
    \delta+2\epsilon+\tau\log N_{\max}
    }{1-\gamma}.
    \label{eq:performance_bound_revised}
\end{equation}
\end{theorem}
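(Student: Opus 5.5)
The argument is a one-step suboptimality bound followed by a standard performance-difference (simulation) argument. Throughout I read $V_\lambda^\star$ as the optimal value over the benchmark domain $\mathcal A_{\rm bench}$, so that $V_\lambda^\star(x)=\max_{a\in\mathcal A_{\rm bench}(x)}Q_\lambda^\star(x,a)$. I also use that $Q_\lambda^\star$ satisfies the Bellman equation \eqref{eq:qstar_revised} with this $V_\lambda^\star$.

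\emph{Step 1: per-state gap of the Boltzmann policy.} Fix $x$ and write $\mathcal B=\mathcal B_\lambda(x)$, $q(a)=Q_\lambda^\star(x,a)$ and $e(a)=-\mathcal E_\theta(x,a)$. The Boltzmann distribution \eqref{eq:energy_policy_revised} maximizes $\sum_a\pi(a)e(a)+\tau H(\pi)$ over the simplex on $\mathcal B$. The maximum value is $\tau\log\sum_{b}\exp(e(b)/\tau)\ge\max_b e(b)$. Since $H(\pi_\theta)\le\log|\mathcal B|\le\log N_{\max}$, this gives
\[
\sum_{a}\pi_\theta(a\mid x)e(a)\ge\max_{b\in\mathcal B}e(b)-\tau\log N_{\max}.
\]
Next I apply Assumption~\ref{ass:energy_approx} twice: $\sum_a\pi_\theta q\ge\sum_a\pi_\theta e-\epsilon$, and $\max_b e\ge\max_b q-\epsilon$. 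Combining these with Assumption~\ref{ass:approx_coverage} yields the per-state bound
\[
V_\lambda^\star(x)-\mathbb E_{a\sim\pi_\theta(\cdot\mid x)}Q_\lambda^\star(x,a)\le\delta+2\epsilon+\tau\log N_{\max}=:\beta.
\]
In the fallback case $\mathcal B=\{\bot\}$ we have $N=1$, the $\log$ term vanishes, and the same chain holds.

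\emph{Step 2: propagate through the dynamics.} Let $T^{\pi}$ be the policy evaluation operator, so $V^{\pi_\theta}_\lambda=T^{\pi_\theta}V^{\pi_\theta}_\lambda$. Write
\[
V_\lambda^\star-V^{\pi_\theta}_\lambda=\bigl(V_\lambda^\star-T^{\pi_\theta}V_\lambda^\star\bigr)+\bigl(T^{\pi_\theta}V_\lambda^\star-T^{\pi_\theta}V_\lambda^{\pi_\theta}\bigr).
\]
The first term equals $V^\star_\lambda(x)-\mathbb E_{\pi_\theta}Q^\star_\lambda(x,a)\le\beta$ by Step 1, because $T^{\pi_\theta}V^\star_\lambda(x)=\mathbb E_{a\sim\pi_\theta}[r_\lambda+\gamma\mathbb E V^\star_\lambda(x')]=\mathbb E_{a\sim\pi_\theta}Q^\star_\lambda(x,a)$. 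The second term is $\gamma\,\mathbb E_{a\sim\pi_\theta,x'}[V^\star_\lambda(x')-V^{\pi_\theta}_\lambda(x')]\le\gamma\|(V^\star_\lambda-V^{\pi_\theta}_\lambda)_+\|_\infty$. Taking the supremum over $x$ of the positive part gives $g\le\beta+\gamma g$, hence $g\le\beta/(1-\gamma)$. Boundedness of $u$ and $g_j$ makes $g$ finite. Equivalently, one can unroll the telescoping sum $\sum_t\gamma^t\mathbb E[\beta]$ along trajectories of $\pi_\theta$.

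\emph{Main obstacle.} The delicate point is Step 1's entropy bound: we need $\mathbb E_\pi e\ge\max e-\tau\log|\mathcal B|$. I would prove it via the Gibbs variational identity, or more directly: for the argmax $b^\star$, $\pi_\theta(b^\star)\ge1/|\mathcal B|$ and $\log\pi_\theta(a)=(e(a)-\mathrm{LSE})/\tau$, so $\mathbb E_\pi e=\tau\,\mathrm{LSE}-\tau H(\pi)$ with $\tau\,\mathrm{LSE}\ge\max e$. A secondary subtlety is that $\pi_\theta$ must be supported on executable actions so that $V^{\pi_\theta}_\lambda$ lives on the same interface as $Q^\star_\lambda$; this holds because Algorithm~\ref{alg:qold} revalidates each action or maps it to $\bot$, which I would note explicitly.
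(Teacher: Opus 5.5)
Your proposal is correct and follows the paper's route. The paper's proof has two parts: a one-step bound from coverage, two uses of the energy approximation, and the log-sum-exp/entropy gap $\tau\log N_{\max}$; then the discounted performance-difference inequality, which you make explicit via $g\le\beta+\gamma g$.

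Two of your side remarks are unnecessary. You need not redefine $V_\lambda^\star$, because $\widetilde{\mathcal E}_N(x)\subseteq\mathcal A_{\rm bench}(x)$ already gives $V_\lambda^\star(x)\le\max_{a\in\mathcal A_{\rm bench}(x)}Q_\lambda^\star(x,a)$. You should also drop the appeal to revalidation: a repaired action would no longer be drawn from the Boltzmann policy, and the argument only needs $Q_\lambda^\star$ to be defined on $\mathcal A_{\rm bench}(x)\supseteq\mathcal B_\lambda(x)$, which the paper stipulates.
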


\begin{proof}[Proof sketch]
Fix $x$ and write $\widehat Q_\theta(x,a)=-\mathcal E_\theta(x,a)$. Assumption~\ref{ass:approx_coverage} bounds the loss from restricting the benchmark domain to $\mathcal B_\lambda(x)$ by $\delta$. Assumption~\ref{ass:energy_approx} gives $|\widehat Q_\theta-Q_\lambda^\star|\le\epsilon$ on the reduced frontier. Since \eqref{eq:energy_policy_revised} is the softmax policy over $\widehat Q_\theta$, the log-sum-exp bound over at most $N_{\max}$ actions gives
$
\max_{a\in\mathcal B_\lambda(x)}\widehat Q_\theta(x,a)
-
\E_{a\sim\pi_\theta(\cdot\mid x)}
[\widehat Q_\theta(x,a)]
\le
\tau\log N_{\max}.
$
Using the uniform approximation once for the maximum and once for the expectation yields a one-step Bellman suboptimality of at most $\delta+2\epsilon+\tau\log N_{\max}$. The discounted performance-difference inequality then gives \eqref{eq:performance_bound_revised}.
\end{proof}

\begin{corollary}[Exact-structure limit]
\label{cor:exact_structure_limit}
If the generator contains an optimal benchmark action, residual screening does not remove feasible optimal actions, quotienting and dominance are exact, top-$K$ preserves an optimal non-dominated representative, and the energy represents $-Q_\lambda^\star$ exactly on the reduced frontier, then $\delta=\epsilon=0$. A deterministic zero-temperature policy over $\mathcal B_\lambda(x)$ is optimal for the fixed-$\lambda$ MDP.
\end{corollary}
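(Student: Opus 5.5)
The hypotheses give $\delta=0$ via Theorem~\ref{thm:frontier_accounting}, $\epsilon=0$ directly from Assumption~\ref{ass:energy_approx}, and optimality of the greedy policy by a one-step Bellman argument. The temperature term of Theorem~\ref{thm:approx_reduced_policy} is not used; we work directly with the $\tau\to0$ (argmin) policy, since the bound \eqref{eq:performance_bound_revised} degenerates only in the limit.

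\textbf{Step 1: $\delta=0$.} Fix $x$ and decompose the coverage gap as $\delta_{\rm fr}(x)=\sum_{i=1}^5\Delta_i(x)$ using the telescoping identity \eqref{eq:frontier_preservation}. I would show each stage loss vanishes.
\begin{itemize}
\item $\Delta_{\rm gen}=0$ because an optimal benchmark action lies in $U_1$.
\item $\Delta_{\rm res}=0$ because screening keeps feasible optimal actions. An optimal benchmark action can be taken executable, hence it survives into $\mathcal S_\varepsilon(x)$.
\item $\Delta_{\rm q}=0$ by Theorem~\ref{thm:quotient_optimality}. The representative of the optimal action's orbit has the same $Q_\lambda^\star$.
\item $\Delta_{\rm dom}=0$ by Theorem~\ref{thm:dominance_reduction} together with finiteness, as in Corollary~\ref{cor:exact_reduction}. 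Take an optimal representative. If it is strictly dominated, its dominator has $Q$ at least as large and is therefore also optimal. Iterating, a dominance chain cannot cycle because strict dominance is irreflexive, and $\mathcal Q_\varepsilon(x)$ is finite. So some non-dominated class attains the maximum.
\item $\Delta_K=0$ by the top-$K$ hypothesis.
\end{itemize}
Hence Assumption~\ref{ass:approx_coverage} holds with $\delta=0$. In the fallback case every $U_i=\{\bot\}$ and all losses are trivially zero.

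\textbf{Step 2: $\epsilon=0$ and optimality.} Since $-\mathcal E_\theta=Q_\lambda^\star$ on $\mathcal B_\lambda(x)$, Assumption~\ref{ass:energy_approx} holds with $\epsilon=0$. Let $\pi_0(x)\in\arg\min_{a\in\mathcal B_\lambda(x)}\mathcal E_\theta(x,a)$, with deterministic tie-breaking. Then
\[
Q_\lambda^\star(x,\pi_0(x))=\max_{a\in\mathcal B_\lambda(x)}Q_\lambda^\star(x,a)=\max_{a\in\mathcal A_{\rm bench}(x)}Q_\lambda^\star(x,a)=V_\lambda^\star(x),
\]
where the second equality uses $\delta=0$ and the third is the Bellman equation \eqref{eq:vstar_revised}. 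This requires the benchmark domain to coincide with the domain defining $V_\lambda^\star$; otherwise we interpret $V_\lambda^\star$ as the benchmark optimum. Thus $\pi_0$ is greedy with respect to $Q_\lambda^\star$, so $T^{\pi_0}V_\lambda^\star=V_\lambda^\star$. Contraction of $T^{\pi_0}$ then gives $V^{\pi_0}_\lambda=V_\lambda^\star$. Equivalently, the performance-difference argument in the proof of Theorem~\ref{thm:approx_reduced_policy} runs with per-step loss $0$, because the $\tau\log N_{\max}$ term arises only from softmax averaging, which the argmin policy avoids.

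\textbf{Main obstacle.} The delicate point is the dominance stage. Theorem~\ref{thm:dominance_reduction} only rules out strictly dominated actions as \emph{unique} optima, so one must explicitly argue that a maximal element of the dominance order attains the optimum. The chain argument above handles this, provided strict quotient dominance is irreflexive and acyclic on the finite set; acyclicity is the property to verify from Definition~\ref{def:dominance}. A second technicality is consistency between the executable domain of the MDP and the benchmark domain; I would state this identification explicitly.
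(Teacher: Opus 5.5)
Your proof is correct. Step~1 follows the paper, which applies the telescoping identity of Theorem~\ref{thm:frontier_accounting} with each stage loss set to zero. Step~2 takes a different route.

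The paper sets $\delta=\epsilon=0$ in Theorem~\ref{thm:approx_reduced_policy} and lets $\tau\to0$ so the $\tau\log N_{\max}$ term disappears. You bypass that theorem and verify directly that the argmin policy is greedy for $Q_\lambda^\star$. Then $T^{\pi_0}V_\lambda^\star=V_\lambda^\star$, and uniqueness of the fixed point of $T^{\pi_0}$ gives $V_\lambda^{\pi_0}=V_\lambda^\star$.

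The paper's version is shorter and reuses the general bound. Read literally, though, it only gives $V_\lambda^\star-V_\lambda^{\pi_\tau}\le\tau\log N_{\max}/(1-\gamma)$ for Boltzmann policies with $\tau>0$. The limit of those policies is uniform over tied maximizers, not your tie-broken argmin. Transferring the conclusion to the deterministic policy needs either a continuity argument or a rerun of that proof with zero per-step softmax loss, which is what your ``equivalently'' remark does. Your route is self-contained and handles ties explicitly.

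You also make explicit a point the paper leaves implicit. Theorem~\ref{thm:dominance_reduction} only rules out strictly dominated actions as \emph{unique} optima, so your chain argument is needed to guarantee that some non-dominated class attains the maximum. To finish it, note that irreflexivity alone does not exclude cycles of length two or more. What excludes them is transitivity of both components of Definition~\ref{def:dominance}: along a cycle, all rewards are equal and each kernel is $\succeq_{\rm st}$ every other in both directions, so no link can be strict.

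Finally, the benchmark-domain identification you flag is automatic under the hypotheses. If the optimal benchmark action is a feasible generated candidate, then $\max_{a\in\mathcal A_{\rm bench}(x)}Q_\lambda^\star(x,a)=\max_{a\in\mathcal E_N(x)}Q_\lambda^\star(x,a)=V_\lambda^\star(x)$. In the fallback case, $\mathcal B_\lambda(x)=\{\bot\}=\widetilde{\mathcal E}_N(x)$, which makes the greedy identity trivial.
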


\begin{proof}[Proof sketch]
The stated exact conditions make the frontier loss zero by Theorem~\ref{thm:frontier_accounting} and set $\epsilon=0$ in Assumption~\ref{ass:energy_approx}. Taking the deterministic zero-temperature limit removes the entropy term in Theorem~\ref{thm:approx_reduced_policy}.
\end{proof}

\section{Experiments}
\label{sec:experiments}

This section evaluates whether \LNOQRD{} improves intent-conditioned network deployment by learning what not to optimize. The controlled benchmarks test two claims: Theorem~\ref{thm:frontier_size} predicts that quotienting, residual screening, dominance, and the top-$K$ budget reduce the candidates exposed to primal ranking when saved scoring cost exceeds dual overhead; Theorem~\ref{thm:frontier_accounting} and Theorem~\ref{thm:intermediate_signal_error} predict that quality loss can be traced to generation, screening, quotienting, dominance, budget truncation, or intermediate primal-estimation error. We therefore report utility, reduction, near-oracle coverage, reduction--gap behavior, violation, scalability, transfer, ablation, and post-generation latency.

\begin{table}[t]
\centering
\caption{Benchmark settings.}
\label{tab:benchmarks}
\scriptsize
\setlength{\tabcolsep}{1.5pt}
\renewcommand{\arraystretch}{0.88}
\resizebox{0.88\linewidth}{!}{%
\begin{tabular}{@{}p{0.23\linewidth}p{0.14\linewidth}p{0.22\linewidth}p{0.31\linewidth}@{}}
\toprule
\textbf{Scenario} & \textbf{Scale} & \textbf{Intent} & \textbf{Main stressor}\\
\midrule
Service graph & Small/Large & DAG service graph & Placement--routing coupling\\
Network slice & Small/Large & Latency, isolation, reliability & Multi-tenant contention\\
Service pipeline & Large & Ordered pipeline DAG & Unseen service-graph transfer\\
Dynamic drift & Large & Time-varying intents & Queue, demand, link-rate, and reliability variation\\
\bottomrule
\end{tabular}%
}
\vspace{-1.5mm}
\end{table}

\begin{table}[t]
\centering
\caption{Small-instance explanatory results. Values are mean $\pm$ standard deviation.}
\label{tab:explanatory_results}
\scriptsize
\setlength{\tabcolsep}{1.8pt}
\renewcommand{\arraystretch}{0.88}
\resizebox{0.88\linewidth}{!}{%
\begin{tabular}{@{}lcccc@{}}
\toprule
\textbf{Method} & \textbf{Reduction} & \textbf{Coverage} & \textbf{OracleGap} & \textbf{Violation}\\
\midrule
Raw candidates & $0.011\pm0.015$ & $0.986\pm0.013$ & $0.021\pm0.007$ & $0.268\pm0.010$\\
Quotient only & $0.338\pm0.025$ & $0.957\pm0.015$ & $0.036\pm0.008$ & $0.196\pm0.010$\\
Quotient + dominance & $0.579\pm0.026$ & $0.934\pm0.016$ & $0.056\pm0.007$ & $0.123\pm0.010$\\
Full \LNOQRD{} & $\mathbf{0.759\pm0.025}$ & $0.908\pm0.016$ & $0.076\pm0.008$ & $\mathbf{0.064\pm0.010}$\\
\bottomrule
\end{tabular}%
}
\vspace{-1.5mm}
\end{table}

\begin{figure}[t]
\centering
\begin{minipage}[t]{0.49\linewidth}
\centering
\includegraphics[width=\linewidth]{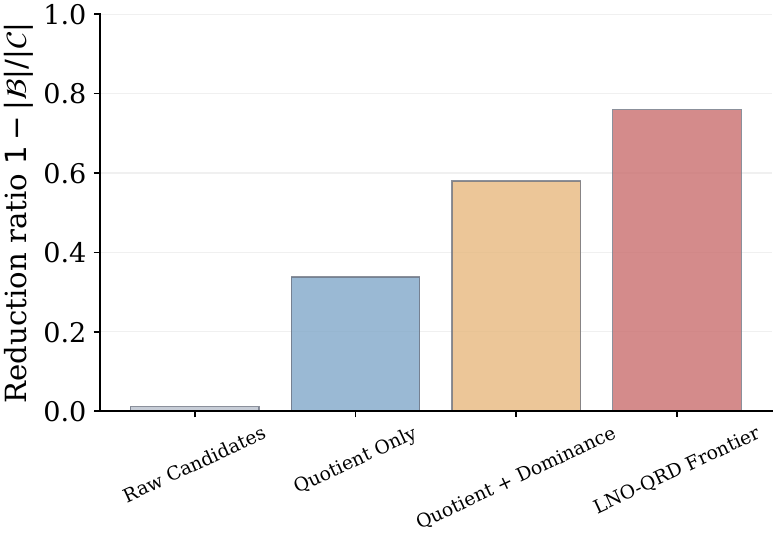}
\vspace{-1mm}
{\footnotesize (a) Candidate reduction.}
\end{minipage}\hfill
\begin{minipage}[t]{0.49\linewidth}
\centering
\includegraphics[width=\linewidth]{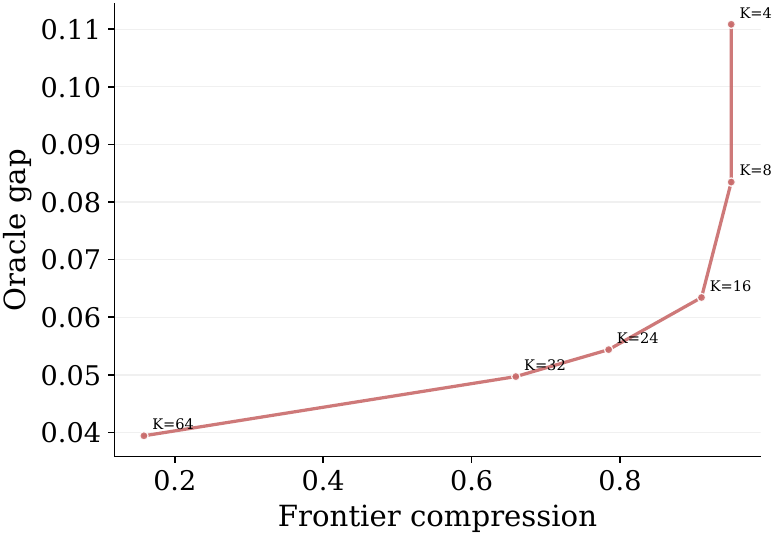}
\vspace{-1mm}
{\footnotesize (b) Reduction--gap tradeoff.}
\end{minipage}
\caption{Small-instance frontier diagnostics. \LNOQRD{} progressively shrinks the candidate set while keeping the oracle gap controlled.}
\label{fig:frontier_diagnostics}
\vspace{-1mm}
\end{figure}

\begin{figure}[t]
\centering
\includegraphics[width=0.55\linewidth]{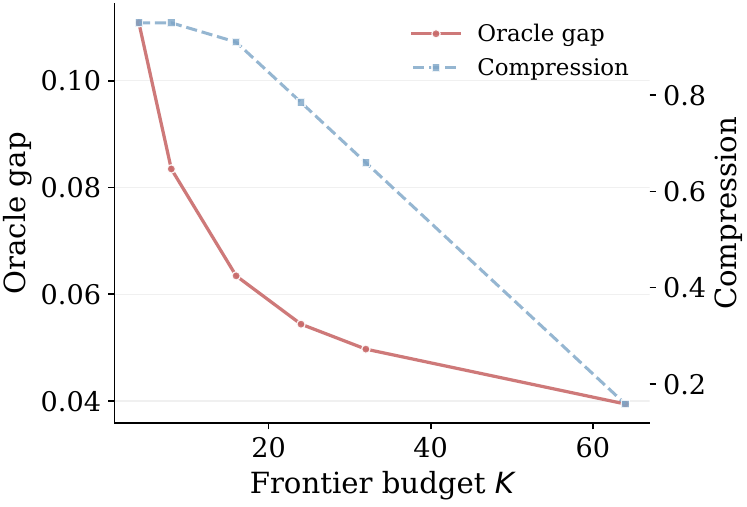}
\caption{Budget-induced reduction--gap tradeoff. A larger frontier budget lowers $\Delta_K$ but exposes more candidates to primal scoring.}
\label{fig:budget_reduction_gap}
\vspace{-1mm}
\end{figure}

\begin{table*}[t]
\centering
\caption{Large-scale deployment results. Values are mean $\pm$ standard deviation over four large benchmark families, three seeds, and all test episodes.}
\label{tab:large_results}
\scriptsize
\setlength{\tabcolsep}{2.2pt}
\renewcommand{\arraystretch}{0.70}
\resizebox{0.70\textwidth}{!}{%
\begin{tabular}{@{}lccccc@{}}
\toprule
\textbf{Method} & \textbf{Utility} & \textbf{IntentSat} & \textbf{Violation} & \textbf{Reduction} & \textbf{Time (ms)}\\
\midrule
Greedy & $3.001\pm0.073$ & $0.277\pm0.059$ & $0.500\pm0.024$ & $0.010\pm0.014$ & $10.537\pm0.401$\\
MLP-PPO & $3.305\pm0.073$ & $0.374\pm0.062$ & $0.431\pm0.022$ & $0.010\pm0.015$ & $22.896\pm0.463$\\
GNN-PPO & $3.548\pm0.074$ & $0.486\pm0.059$ & $0.355\pm0.021$ & $0.011\pm0.015$ & $25.186\pm0.488$\\
Transformer & $3.642\pm0.070$ & $0.525\pm0.058$ & $0.327\pm0.020$ & $0.009\pm0.014$ & $29.327\pm0.488$\\
\textsc{Penalty-RL} & $3.523\pm0.072$ & $0.579\pm0.056$ & $0.291\pm0.019$ & $0.009\pm0.014$ & $24.029\pm0.448$\\
\LNOQRD{} & $\mathbf{3.973\pm0.071}$ & $\mathbf{0.724\pm0.056}$ & $\mathbf{0.191\pm0.017}$ & $\mathbf{0.730\pm0.025}$ & $\mathbf{7.008\pm0.459}$\\
\bottomrule
\end{tabular}%
}
\vspace{-1.5mm}
\end{table*}

\begin{figure}[t]
\centering
\begin{minipage}[t]{0.49\linewidth}
\centering
\includegraphics[width=\linewidth]{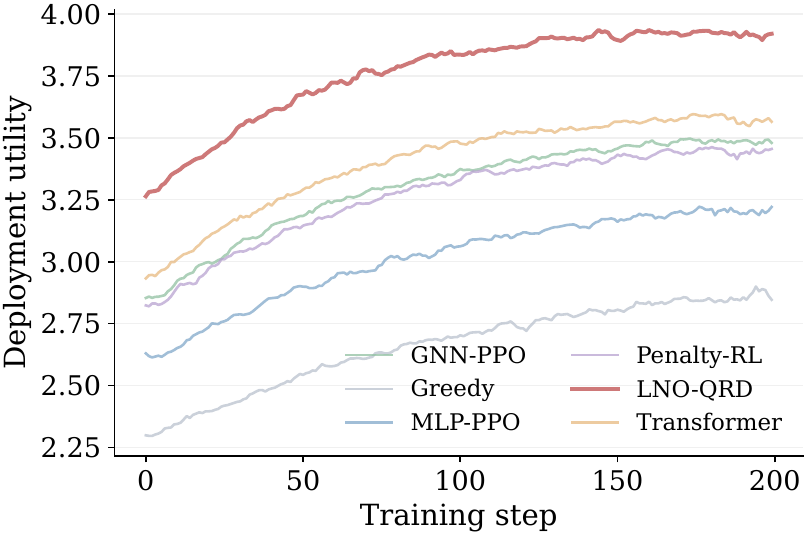}
\vspace{-1mm}
{\footnotesize (a) Learning curves.}
\end{minipage}\hfill
\begin{minipage}[t]{0.49\linewidth}
\centering
\includegraphics[width=\linewidth]{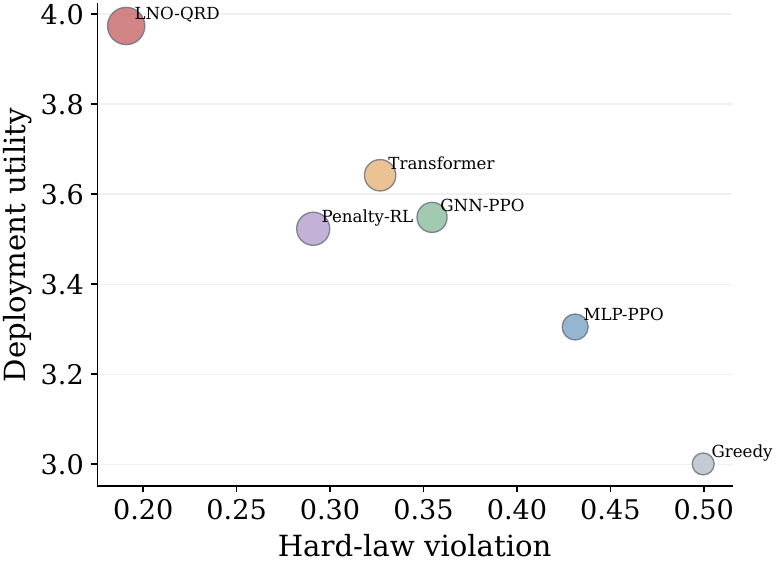}
\vspace{-1mm}
{\footnotesize (b) Utility--violation tradeoff.}
\end{minipage}
\caption{Large-scale deployment performance. \LNOQRD{} improves utility while shifting the tradeoff toward lower violation.}
\label{fig:large_performance}
\vspace{-1mm}
\end{figure}

\begin{figure}[t]
\centering
\begin{minipage}[t]{0.49\linewidth}
\centering
\includegraphics[width=\linewidth]{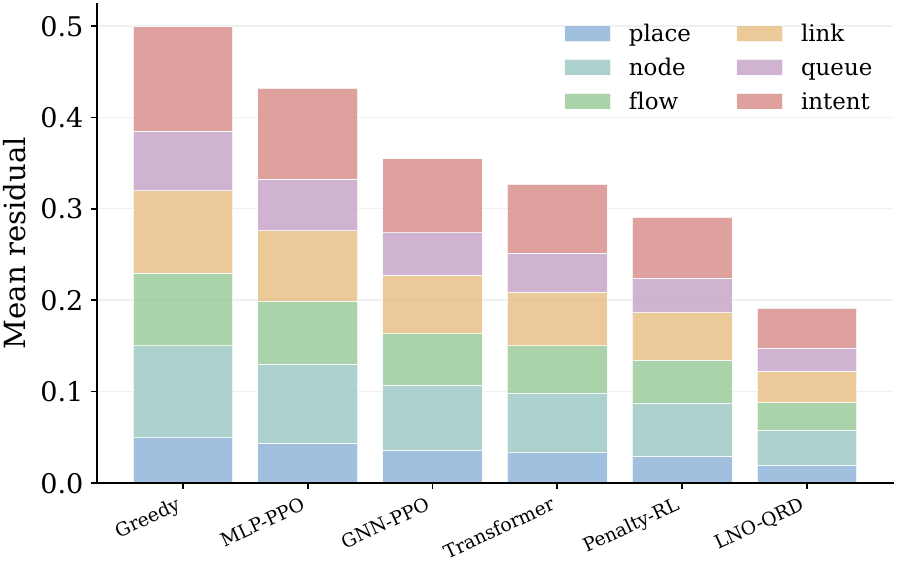}
\vspace{-1mm}
{\footnotesize (a) Law-residual profile.}
\end{minipage}\hfill
\begin{minipage}[t]{0.49\linewidth}
\centering
\includegraphics[width=\linewidth]{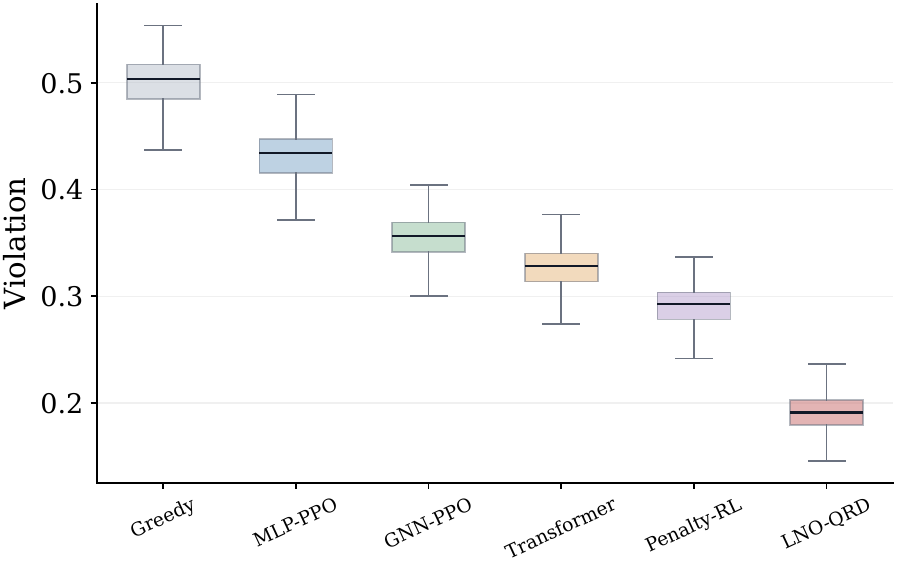}
\vspace{-1mm}
{\footnotesize (b) Violation spread.}
\end{minipage}
\caption{Feasibility diagnostics. \LNOQRD{} reduces multiple residual coordinates and lowers violation spread across large scenarios.}
\label{fig:law_feasibility}
\vspace{-1mm}
\end{figure}

\begin{figure}[t]
\centering
\begin{minipage}[t]{0.49\linewidth}
\centering
\includegraphics[width=\linewidth]{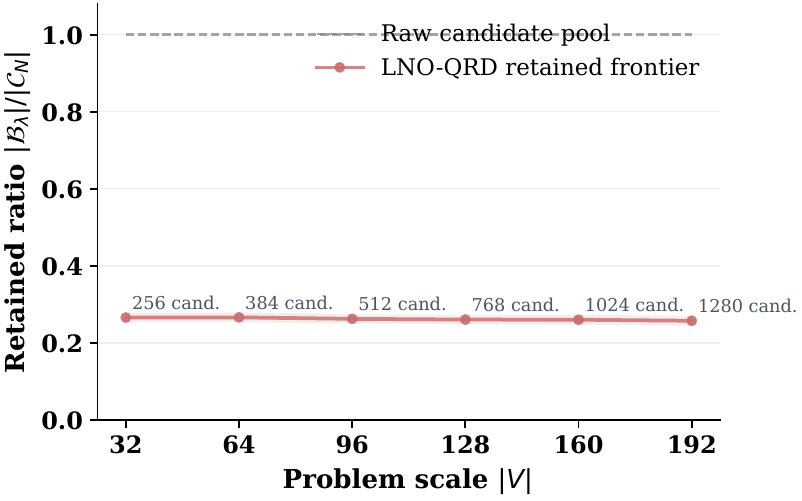}
\vspace{-1mm}
{\footnotesize (a) Frontier ratio vs. scale.}
\end{minipage}\hfill
\begin{minipage}[t]{0.49\linewidth}
\centering
\includegraphics[width=\linewidth]{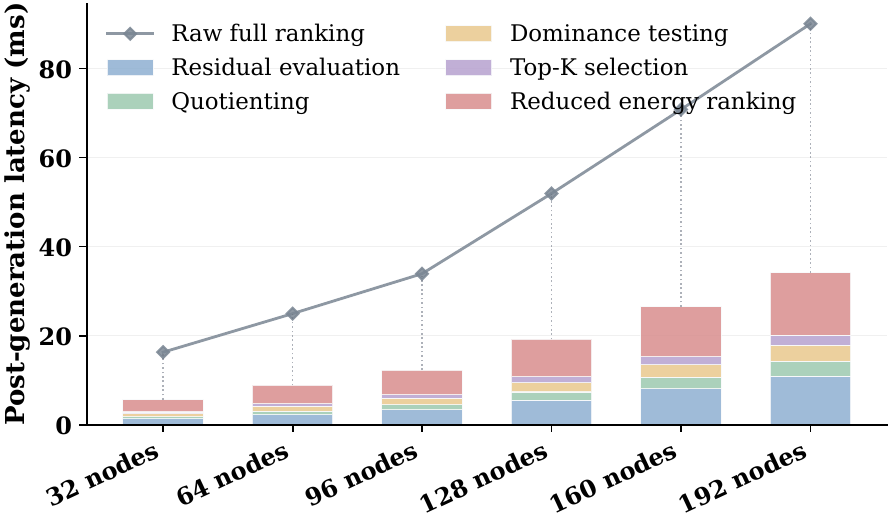}
\vspace{-1mm}
{\footnotesize (b) Overhead decomposition.}
\end{minipage}
\caption{Scalability and overhead. Panel (a) shows the retained-frontier ratio as scale grows. Panel (b) decomposes post-generation latency and compares the reduced pipeline with raw full-candidate ranking.}
\label{fig:scalability_overhead}
\vspace{-1mm}
\end{figure}

\begin{table}[t]
\centering
\caption{Ablation study on the two large structural benchmarks. Values are mean $\pm$ standard deviation.}
\label{tab:ablation}
\scriptsize
\setlength{\tabcolsep}{1.8pt}
\renewcommand{\arraystretch}{0.88}
\resizebox{0.88\linewidth}{!}{%
\begin{tabular}{@{}lcccc@{}}
\toprule
\textbf{Variant} & \textbf{Utility} & \textbf{Violation} & \textbf{Reduction} & \textbf{GenGap}\\
\midrule
Full & $\mathbf{4.003\pm0.080}$ & $\mathbf{0.187\pm0.016}$ & $\mathbf{0.731\pm0.026}$ & $\mathbf{0.168\pm0.016}$\\
NoQuotient & $3.886\pm0.077$ & $0.206\pm0.018$ & $0.569\pm0.025$ & $0.175\pm0.016$\\
NoDominance & $3.935\pm0.082$ & $0.195\pm0.017$ & $0.402\pm0.025$ & $0.172\pm0.015$\\
NoResidual & $3.805\pm0.079$ & $0.276\pm0.019$ & $0.599\pm0.025$ & $0.183\pm0.015$\\
NoOrderLoss & $3.904\pm0.078$ & $0.213\pm0.018$ & $0.685\pm0.026$ & $0.176\pm0.016$\\
NoInvLoss & $3.914\pm0.080$ & $0.208\pm0.018$ & $0.701\pm0.025$ & $0.174\pm0.016$\\
\bottomrule
\end{tabular}%
}
\vspace{-1.5mm}
\end{table}

\begin{figure}[h]
\centering
\includegraphics[width=0.82\linewidth]{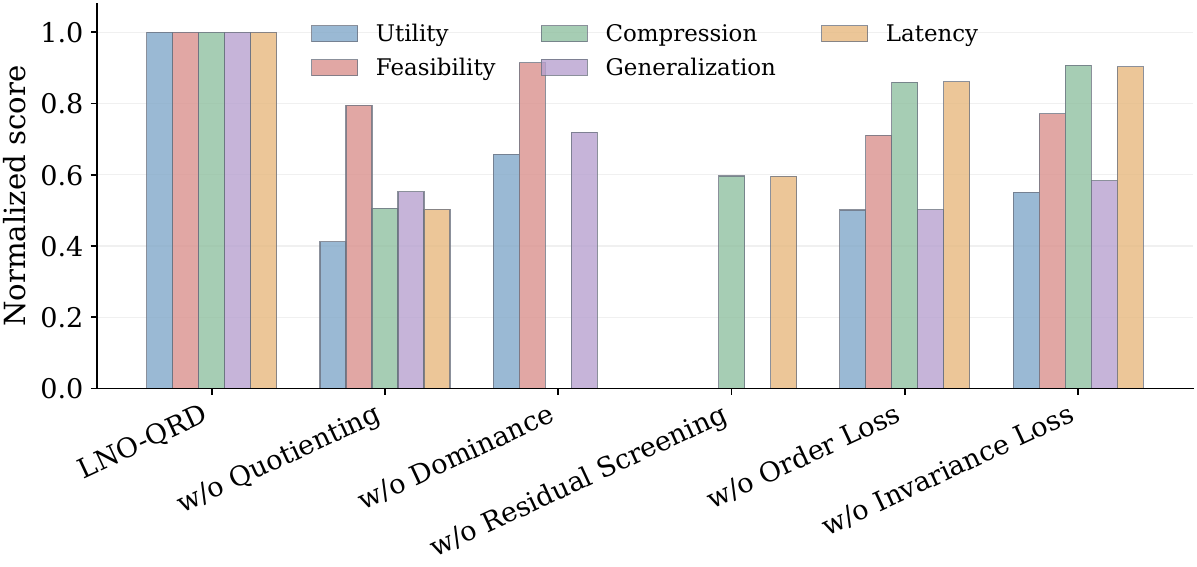}
\caption{Ablation summary. Residual screening mainly improves feasibility, quotienting removes state--intent redundancy, dominance gives the largest compression, and order/invariance losses stabilize reduced-frontier ranking.}
\label{fig:ablation_summary}
\vspace{-1mm}
\end{figure}

\subsection{Experimental Setup}
\label{subsec:exp_setup}

We evaluate four benchmark families: service-graph deployment, network-slice realization, service-pipeline deployment, and dynamic drift \cite{mijumbi2015network,bhamare2016survey,sun2022survey}. Small instances use $|V|\in\{8,10\}$, $|K|\in\{4,5\}$, $N=64$, and frontier budget $K_{\rm fr}=16$ over $10$ seeds for oracle diagnostics. Large instances use $|V|\in[124,128]$, $|K|\in[10,14]$, $N=512$, and frontier budget $K_{\rm fr}=24$; dynamic drift varies traffic, queues, link rates, and reliability. All candidate-based methods share the same $\mathcal C_N(x)$ generator, including capacity-, load-, locality-, trust-, delay-, bandwidth-, reliability-aware, balanced, randomized, and repair-based candidates. Hence any gain must come from excluding infeasible, equivalent, dominated, or unnecessary candidates, not from a stronger generator.

\subsection{Baselines and Metrics}
\label{subsec:exp_baselines_metrics}

We compare \LNOQRD{} with six baselines: a small-instance MILP/CP oracle \cite{boyd2004convex,papadimitriou1998combinatorial}, Greedy, MLP-PPO \cite{schulman2017proximal}, GNN-PPO \cite{kipf2016semi,velivckovic2017graph}, Transformer candidate ranking \cite{vaswani2017attention,lee2019set}, and \textsc{Penalty-RL} \cite{altman2021constrained,achiam2017constrained}. The ablations remove quotienting, dominance pruning, law residuals, order supervision, or invariance supervision.

Utility is the deployment reward after evaluating delay, reliability, queue, and resource terms; IntentSat is normalized to $[0,1]$; Violation is the normalized pre-revalidation residual of the selected proposal; Reduction is $1-|\mathcal B_\lambda(x)|/|\mathcal C_N(x)|$. OracleGap is measured against the small-instance oracle or a large-pool proxy, Coverage is the preserved fraction of near-oracle deployments, GenGap is the unseen-intent/topology transfer gap, and Time is post-generation latency for candidate evaluation, reduction, and ranking.

\subsection{Small-Instance Explanatory Results}
\label{subsec:exp_explanatory}

Small certifiable instances test the error-accounting view in Theorem~\ref{thm:frontier_accounting}. Raw candidates preserve high oracle coverage but contain many law-violating deployments. Quotienting reduces the set by $33.8\%$ with $95.7\%$ coverage; adding dominance raises reduction to $57.9\%$; full \LNOQRD{} reaches $75.9\%$ reduction and the lowest violation while retaining $90.8\%$ near-oracle coverage. Thus most candidates are removed before primal ranking, while the high-value frontier remains largely covered.

\subsection{Large-Scale Deployment Results}
\label{subsec:exp_large}

Large-scale results test whether frontier reduction improves policy quality and online latency. Across four large benchmark families, \LNOQRD{} obtains the best utility, IntentSat, violation, reduction, and post-generation time in Table~\ref{tab:large_results}. Relative to Transformer, utility increases from $3.642$ to $3.973$ and latency drops from $29.327$ ms to $7.008$ ms; relative to \textsc{Penalty-RL}, violation decreases from $0.291$ to $0.191$. With average reduction $0.730$, the primal learner ranks only about $27\%$ of the raw candidate pool.

\subsection{Feasibility and Law-Residual Analysis}
\label{subsec:exp_law_residual}

Residual diagnostics explain the feasibility gain. \textsc{Penalty-RL} reduces violations through training penalties, whereas \LNOQRD{} uses law residuals as pre-ranking not-to-optimize certificates. Mean violation drops from $0.291$ for \textsc{Penalty-RL} to $0.191$ for \LNOQRD{}, with lower placement, node-capacity, flow, link-capacity, queue, and intent residuals in Fig.~\ref{fig:law_feasibility}.

\subsection{Scalability and Dual-Learning Overhead}
\label{subsec:exp_scalability_overhead}

Scalability and overhead results test the speedup condition in Theorem~\ref{thm:frontier_size} and \eqref{eq:lno_speedup_condition}. As scale grows, the retained-frontier ratio $|\mathcal B_\lambda(x)|/|\mathcal C_N(x)|$ remains substantially below one, so the primal scorer is applied to a compact subset rather than the full candidate pool. The overhead decomposition further shows that residual evaluation, quotienting, dominance testing, and top-$K$ selection are outweighed by the saved full-candidate energy-ranking cost.

\subsection{Transfer Across Intents and Topologies}
\label{subsec:exp_transfer}

For transfer, we train on service-graph or network-slice deployments and test on unseen large benchmark families. We report aggregate behavior to save space: \LNOQRD{} achieves the highest average transfer utility, $3.908$, compared with $3.577$ for Transformer and $3.490$ for GNN-PPO. Same-family transfer is easier, while dynamic drift and unseen service pipelines produce larger gaps; nevertheless, residual, quotient, and dominance certificates transfer better than fixed candidate-index policies because they depend on network laws and state--intent structure.

\subsection{Ablation Study}
\label{subsec:exp_ablation}

The ablation study isolates the certificates. Removing quotienting weakens utility and compression, confirming the role of state--intent equivalence. Removing dominance causes the largest reduction loss, showing that it excludes many feasible but noncompetitive candidates. Removing residual screening gives the largest feasibility degradation, while removing order or invariance losses weakens energy ranking on the reduced frontier.

Overall, the experiments support the central claim: \LNOQRD{} reduces small-instance candidates by $75.9\%$ while retaining $90.8\%$ near-oracle coverage, and achieves the best large-scale utility, violation, reduction, and ranking latency. Scalability, transfer, and ablation results attribute these gains to residual screening, quotienting, dominance, and structural energy learning rather than to the shared generator.

\section{Conclusion}
This paper studied learning not to optimize as a complementary counterpart to constrained network policy optimization. \LNOQRD{} uses residual screening, state--intent quotienting, and dominance pruning to remove infeasible, redundant, and dominated candidates before expensive primal ranking, so the learner optimizes only over a budgeted frontier. We proved lossless reduction conditions, a frontier-size and ranking-speedup bound, and stagewise error accounts for approximate certificates and intermediate learning signals. Experiments on controlled deployment benchmarks show that this dual layer preserves near-oracle candidates on small instances and improves large-scale utility, violation, transfer, and ranking latency. The main conclusion is that intermediate network-law information can shape not only how actions are ranked, but also which actions should not be optimized.

\nocite{*}
\bibliographystyle{IEEEtran}
\bibliography{refs}

@article{clemm2022intent,
  title={Intent-based networking-concepts and definitions},
  author={Clemm, Alexander and Ciavaglia, Laurent and Granville, Lisandro Zambenedetti and Tantsura, Jeff},
  year={2022},
  publisher={IETF RFC 9315, Oct}
}

@article{li2022intent,
  title={Intent classification},
  author={Li, Chen and Havel, Olga and Olariu, Aiana and Martinez-Julia, Peo and Nobre, J{\'e}ferson Campos and Lopez, Diego},
  journal={RFC 9316},
  year={2022}
}

@article{yao2025use,
  title={Use Cases and Practices for Intent-Based Networking},
  author={Yao, K and Chen, D and Jeong, JP and Wu, Q and Yang, C and Contreras, LM and Fioccola, G},
  journal={Internet Engineering Task Force, Internet-Draft draft-irtf-nmrg-ibnusecases-00},
  year={2025}
}

@article{leivadeas2022survey,
  title={A survey on intent-based networking},
  author={Leivadeas, Aris and Falkner, Matthias},
  journal={IEEE Communications Surveys \& Tutorials},
  volume={25},
  number={1},
  pages={625--655},
  year={2022},
  publisher={IEEE}
}

@article{zhang2024distributed,
  title={Distributed age-of-information scheduling with noma via deep reinforcement learning},
  author={Zhang, Congwei and Zou, Yifei and Zhang, Zuyuan and Yu, Dongxiao and G{\'o}mez, Jorge Torres and Lan, Tian and Dressler, Falko and Cheng, Xiuzhen},
  journal={IEEE Transactions on Mobile Computing},
  volume={24},
  number={1},
  pages={30--44},
  year={2024},
  publisher={IEEE}
}

@article{zou2024distributed,
  title={A distributed abstract mac layer for cooperative learning on internet of vehicles},
  author={Zou, Yifei and Zhang, Zuyuan and Zhang, Congwei and Zheng, Yanwei and Yu, Dongxiao and Yu, Jiguo},
  journal={IEEE Transactions on Intelligent Transportation Systems},
  volume={25},
  number={8},
  pages={8972--8983},
  year={2024},
  publisher={IEEE}
}

@inproceedings{zhang2025network,
  title={Network diffuser for placing-scheduling service function chains with inverse demonstration},
  author={Zhang, Zuyuan and Aggarwal, Vaneet and Lan, Tian},
  booktitle={IEEE INFOCOM 2025-IEEE Conference on Computer Communications},
  pages={1--10},
  year={2025},
  organization={IEEE}
}

@article{zhang2026operator,
  title={Operator-guided invariance learning for continuous reinforcement learning},
  author={Zhang, Zuyuan and Yu, Fei Xu and Lan, Tian},
  journal={arXiv preprint arXiv:2605.06500},
  year={2026}
}

@inproceedings{zhanggeometric,
  title={Geometric Coherence Learning for Structuring Value Functions in Plain MDPs},
  author={Zhang, Zuyuan and Fang, Zeyu and Lan, Tian},
  booktitle={Forty-third International Conference on Machine Learning}
}

@article{zhang2026geometry,
  title={Geometry of drifting mdps with path-integral stability certificates},
  author={Zhang, Zuyuan and Imani, Mahdi and Lan, Tian},
  journal={arXiv preprint arXiv:2601.21991},
  year={2026}
}

@inproceedings{qiao2024br,
  title={Br-defedrl: Byzantine-robust decentralized federated reinforcement learning with fast convergence and communication efficiency},
  author={Qiao, Jing and Zhang, Zuyuan and Yue, Sheng and Yuan, Yuan and Cai, Zhipeng and Zhang, Xiao and Ren, Ju and Yu, Dongxiao},
  booktitle={Ieee infocom 2024-ieee conference on computer communications},
  pages={141--150},
  year={2024},
  organization={IEEE}
}

@inproceedings{yu2024look,
  title={Look-ahead robust network optimization with generative state predictions},
  author={Yu, Fei Xu and Zhang, Zuyuan and Grob, Emily and Adam, Gina and Coffey, Sean and Bastian, Nathaniel D and Lan, Tian},
  booktitle={AAAI 2025 Workshop on Artificial Intelligence for Wireless Communications and Networking (AI4WCN)},
  year={2024}
}

@inproceedings{zhanghodgeflow,
  title={HodgeFlow Policy Search by Topologically Dissecting Temporal-Difference Signals in Non-Markovian Environments},
  author={Zhang, Zuyuan and Tang, Sizhe and Lan, Tian},
  booktitle={Forty-third International Conference on Machine Learning}
}

@inproceedings{zhang2026lisfc,
  title={Lisfc-search: Lifelong search for network sfc optimization under non-stationary drifts},
  author={Zhang, Zuyuan and Aggarwal, Vaneet and Lan, Tian},
  booktitle={IEEE INFOCOM 2026-IEEE Conference on Computer Communications},
  pages={1--6},
  year={2026},
  organization={IEEE}
}

@article{tang2026nonzero,
  title={Nonzero: Interaction-guided exploration for multi-agent monte carlo tree search},
  author={Tang, Sizhe and Zhang, Zuyuan and Imani, Mahdi and Lan, Tian},
  journal={arXiv preprint arXiv:2605.00751},
  year={2026}
}

@article{zhang2026counterfactual,
  title={Counterfactual Regret Minimization-Mixing for Noncooperative Stochastic Spectrum Games with Imperfect Information},
  author={Zhang, Zuyuan and Liu, Lingjia and Bastian, Nathaniel D and Lan, Tian},
  journal={IEEE Transactions on Networking},
  year={2026},
  publisher={IEEE}
}

@article{zhang2026metric,
  title={Metric-Gradient Projection for Stable Multi-Agent Policy Learning},
  author={Zhang, Zuyuan and Tang, Sizhe and Imani, Mahdi and Lan, Tian},
  journal={arXiv preprint arXiv:2605.18809},
  year={2026}
}

@article{mehmood2023intent,
  title={Intent-driven autonomous network and service management in future cellular networks: A structured literature review},
  author={Mehmood, Kashif and Kralevska, Katina and Palma, David},
  journal={Computer Networks},
  volume={220},
  pages={109477},
  year={2023},
  publisher={Elsevier}
}

@article{mijumbi2015network,
  title={Network function virtualization: State-of-the-art and research challenges},
  author={Mijumbi, Rashid and Serrat, Joan and Gorricho, Juan-Luis and Bouten, Niels and De Turck, Filip and Boutaba, Raouf},
  journal={IEEE Communications surveys \& tutorials},
  volume={18},
  number={1},
  pages={236--262},
  year={2015},
  publisher={IEEE}
}

@article{bhamare2016survey,
  title={A survey on service function chaining},
  author={Bhamare, Deval and Jain, Raj and Samaka, Mohammed and Erbad, Aiman},
  journal={Journal of Network and Computer Applications},
  volume={75},
  pages={138--155},
  year={2016},
  publisher={Elsevier}
}

@article{sun2022survey,
  title={A survey on the placement of virtual network functions},
  author={Sun, Jie and Zhang, Yi and Liu, Feng and Wang, Huandong and Xu, Xiaojian and Li, Yong},
  journal={Journal of Network and Computer Applications},
  volume={202},
  pages={103361},
  year={2022},
  publisher={Elsevier}
}

@book{papadimitriou1998combinatorial,
  title={Combinatorial optimization: algorithms and complexity},
  author={Papadimitriou, Christos H and Steiglitz, Kenneth},
  year={1998},
  publisher={Courier Corporation}
}

@inproceedings{mao2016resource,
  title={Resource management with deep reinforcement learning},
  author={Mao, Hongzi and Alizadeh, Mohammad and Menache, Ishai and Kandula, Srikanth},
  booktitle={Proceedings of the 15th ACM workshop on hot topics in networks},
  pages={50--56},
  year={2016}
}

@incollection{mao2019learning,
  title={Learning scheduling algorithms for data processing clusters},
  author={Mao, Hongzi and Schwarzkopf, Malte and Venkatakrishnan, Shaileshh Bojja and Meng, Zili and Alizadeh, Mohammad},
  booktitle={Proceedings of the ACM special interest group on data communication},
  pages={270--288},
  year={2019}
}

@article{luong2019applications,
  title={Applications of deep reinforcement learning in communications and networking: A survey},
  author={Luong, Nguyen Cong and Hoang, Dinh Thai and Gong, Shimin and Niyato, Dusit and Wang, Ping and Liang, Ying-Chang and Kim, Dong In},
  journal={IEEE communications surveys \& tutorials},
  volume={21},
  number={4},
  pages={3133--3174},
  year={2019},
  publisher={IEEE}
}

@article{xiao2021leveraging,
  title={Leveraging deep reinforcement learning for traffic engineering: A survey},
  author={Xiao, Yang and Liu, Jun and Wu, Jiawei and Ansari, Nirwan},
  journal={IEEE Communications Surveys \& Tutorials},
  volume={23},
  number={4},
  pages={2064--2097},
  year={2021},
  publisher={IEEE}
}

@book{ravindran2004algebraic,
  title={An algebraic approach to abstraction in reinforcement learning},
  author={Ravindran, Balaraman},
  year={2004},
  publisher={University of Massachusetts Amherst}
}

@article{givan2003equivalence,
  title={Equivalence notions and model minimization in Markov decision processes},
  author={Givan, Robert and Dean, Thomas and Greig, Matthew},
  journal={Artificial intelligence},
  volume={147},
  number={1-2},
  pages={163--223},
  year={2003},
  publisher={Elsevier}
}

@article{taylor2008bounding,
  title={Bounding performance loss in approximate MDP homomorphisms},
  author={Taylor, Jonathan and Precup, Doina and Panagaden, Prakash},
  journal={Advances in Neural Information Processing Systems},
  volume={21},
  year={2008}
}

@article{van2020mdp,
  title={Mdp homomorphic networks: Group symmetries in reinforcement learning},
  author={Van der Pol, Elise and Worrall, Daniel and van Hoof, Herke and Oliehoek, Frans and Welling, Max},
  journal={Advances in Neural Information Processing Systems},
  volume={33},
  pages={4199--4210},
  year={2020}
}

@book{puterman2014markov,
  title={Markov decision processes: discrete stochastic dynamic programming},
  author={Puterman, Martin L},
  year={2014},
  publisher={John Wiley \& Sons}
}

@article{smith2002structural,
  title={Structural properties of stochastic dynamic programs},
  author={Smith, James E and McCardle, Kevin F},
  journal={Operations Research},
  volume={50},
  number={5},
  pages={796--809},
  year={2002},
  publisher={INFORMS}
}

@book{topkis1998supermodularity,
  title={Supermodularity and complementarity},
  author={Topkis, Donald M},
  year={1998},
  publisher={Princeton university press}
}

@article{jiang2015approximate,
  title={An approximate dynamic programming algorithm for monotone value functions},
  author={Jiang, Daniel R and Powell, Warren B},
  journal={Operations research},
  volume={63},
  number={6},
  pages={1489--1511},
  year={2015},
  publisher={INFORMS}
}

@article{zaheer2017deep,
  title={Deep sets},
  author={Zaheer, Manzil and Kottur, Satwik and Ravanbakhsh, Siamak and Poczos, Barnabas and Salakhutdinov, Russ R and Smola, Alexander J},
  journal={Advances in neural information processing systems},
  volume={30},
  year={2017}
}

@inproceedings{lee2019set,
  title={Set transformer: A framework for attention-based permutation-invariant neural networks},
  author={Lee, Juho and Lee, Yoonho and Kim, Jungtaek and Kosiorek, Adam and Choi, Seungjin and Teh, Yee Whye},
  booktitle={International conference on machine learning},
  pages={3744--3753},
  year={2019},
  organization={PMLR}
}

@article{kipf2016semi,
  title={Semi-supervised classification with graph convolutional networks},
  author={Kipf, Thomas N and Welling, Max},
  journal={arXiv preprint arXiv:1609.02907},
  year={2016}
}

@article{velivckovic2017graph,
  title={Graph attention networks},
  author={Veli{\v{c}}kovi{\'c}, Petar and Cucurull, Guillem and Casanova, Arantxa and Romero, Adriana and Lio, Pietro and Bengio, Yoshua},
  journal={arXiv preprint arXiv:1710.10903},
  year={2017}
}

@article{vaswani2017attention,
  title={Attention is all you need},
  author={Vaswani, Ashish and Shazeer, Noam and Parmar, Niki and Uszkoreit, Jakob and Jones, Llion and Gomez, Aidan N and Kaiser, {\L}ukasz and Polosukhin, Illia},
  journal={Advances in neural information processing systems},
  volume={30},
  year={2017}
}

@article{raissi2019physics,
  title={Physics-informed neural networks: A deep learning framework for solving forward and inverse problems involving nonlinear partial differential equations},
  author={Raissi, Maziar and Perdikaris, Paris and Karniadakis, George E},
  journal={Journal of Computational physics},
  volume={378},
  pages={686--707},
  year={2019},
  publisher={Elsevier}
}

@book{altman2021constrained,
  title={Constrained Markov decision processes},
  author={Altman, Eitan},
  year={2021},
  publisher={Routledge}
}

@misc{waissi1994network,
  title={Network flows: Theory, algorithms, and applications},
  author={Waissi, Gary R},
  year={1994},
  publisher={JSTOR}
}

@book{cormen2022introduction,
  title={Introduction to algorithms},
  author={Cormen, Thomas H and Leiserson, Charles E and Rivest, Ronald L and Stein, Clifford},
  year={2022},
  publisher={MIT press}
}

@inproceedings{tassiulas1990stability,
  title={Stability properties of constrained queueing systems and scheduling policies for maximum throughput in multihop radio networks},
  author={Tassiulas, Leandros and Ephremides, Anthony},
  booktitle={29th IEEE Conference on Decision and Control},
  pages={2130--2132},
  year={1990},
  organization={IEEE}
}

@book{georgiadis2006resource,
  title={Resource allocation and cross-layer control in wireless networks},
  author={Georgiadis, Leonidas and Neely, Michael J and Tassiulas, Leandros},
  year={2006},
  publisher={Now Publishers Inc}
}

@book{boyd2004convex,
  title={Convex optimization},
  author={Boyd, Stephen and Vandenberghe, Lieven},
  year={2004},
  publisher={Cambridge university press}
}

@article{schulman2017proximal,
  title={Proximal policy optimization algorithms},
  author={Schulman, John and Wolski, Filip and Dhariwal, Prafulla and Radford, Alec and Klimov, Oleg},
  journal={arXiv preprint arXiv:1707.06347},
  year={2017}
}

@inproceedings{burges2005learning,
  title={Learning to rank using gradient descent},
  author={Burges, Chris and Shaked, Tal and Renshaw, Erin and Lazier, Ari and Deeds, Matt and Hamilton, Nicole and Hullender, Greg},
  booktitle={Proceedings of the 22nd international conference on Machine learning},
  pages={89--96},
  year={2005}
}

@inproceedings{achiam2017constrained,
  title={Constrained policy optimization},
  author={Achiam, Joshua and Held, David and Tamar, Aviv and Abbeel, Pieter},
  booktitle={International conference on machine learning},
  pages={22--31},
  year={2017},
  organization={Pmlr}
}

@techreport{halpern2015service,
  title={Service function chaining (SFC) architecture},
  author={Halpern, Joel and Pignataro, Carlos},
  year={2015}
}

@article{foukas2017network,
  title={Network slicing in 5G: Survey and challenges},
  author={Foukas, Xenofon and Patounas, Georgios and Elmokashfi, Ahmed and Marina, Mahesh K},
  journal={IEEE communications magazine},
  volume={55},
  number={5},
  pages={94--100},
  year={2017},
  publisher={IEEE}
}

@article{ordonez2017network,
  title={Network slicing for 5G with SDN/NFV: Concepts, architectures, and challenges},
  author={Ordonez-Lucena, Jose and Ameigeiras, Pablo and Lopez, Diego and Ramos-Munoz, Juan J and Lorca, Javier and Folgueira, Jesus},
  journal={IEEE Communications Magazine},
  volume={55},
  number={5},
  pages={80--87},
  year={2017},
  publisher={IEEE}
}

@article{mao2017survey,
  title={A survey on mobile edge computing: The communication perspective},
  author={Mao, Yuyi and You, Changsheng and Zhang, Jun and Huang, Kaibin and Letaief, Khaled B},
  journal={IEEE communications surveys \& tutorials},
  volume={19},
  number={4},
  pages={2322--2358},
  year={2017},
  publisher={IEEE}
}

@article{kreutz2014software,
  title={Software-defined networking: A comprehensive survey},
  author={Kreutz, Diego and Ramos, Fernando MV and Verissimo, Paulo Esteves and Rothenberg, Christian Esteve and Azodolmolky, Siamak and Uhlig, Steve},
  journal={Proceedings of the IEEE},
  volume={103},
  number={1},
  pages={14--76},
  year={2014},
  publisher={Ieee}
}

@article{bari2012data,
  title={Data center network virtualization: A survey},
  author={Bari, Md Faizul and Boutaba, Raouf and Esteves, Rafael and Granville, Lisandro Zambenedetti and Podlesny, Maxim and Rabbani, Md Golam and Zhang, Qi and Zhani, Mohamed Faten},
  journal={IEEE communications surveys \& tutorials},
  volume={15},
  number={2},
  pages={909--928},
  year={2012},
  publisher={IEEE}
}

@article{tessler2018reward,
  title={Reward constrained policy optimization},
  author={Tessler, Chen and Mankowitz, Daniel J and Mannor, Shie},
  journal={arXiv preprint arXiv:1805.11074},
  year={2018}
}

@inproceedings{liu2020ipo,
  title={Ipo: Interior-point policy optimization under constraints},
  author={Liu, Yongshuai and Ding, Jiaxin and Liu, Xin},
  booktitle={Proceedings of the AAAI conference on artificial intelligence},
  volume={34},
  number={04},
  pages={4940--4947},
  year={2020}
}

\end{document}